\documentclass[11pt]{article}

\usepackage[T1]{fontenc}
\usepackage[utf8]{inputenc}
\usepackage{lmodern}
\usepackage{microtype}

\usepackage[a4paper,margin=1in,headheight=14pt,headsep=12pt]{geometry}
\usepackage{setspace}
\usepackage{amsmath,amssymb,amsthm,mathtools}

\usepackage{graphicx}
\usepackage{booktabs}
\usepackage{array}
\usepackage{subcaption}
\usepackage{caption}
\usepackage[round,authoryear,sort&compress]{natbib}
\usepackage{xcolor}
\definecolor{linkblue}{HTML}{1F3A93}
\usepackage[
  colorlinks=true,
  citecolor=linkblue,
  linkcolor=linkblue,
  urlcolor=linkblue,
  bookmarksopen=true,
  bookmarksnumbered=true,
  pdftitle={Multiplicative Contractions, Additive Recoveries: Functional-Form Restrictions on Risk Exposure Dynamics},
  pdfauthor={Liang Chen},
  pdfkeywords={intermediary asset pricing, VaR constraints, drawdown-recovery asymmetry, regime-conditional dynamics, FINRA margin debt}
]{hyperref}
\usepackage[capitalize,noabbrev]{cleveref}

\usepackage{fancyhdr}
\fancypagestyle{plain}{%
  \fancyhf{}\fancyfoot[C]{\small\thepage}%
  }

\usepackage{titlesec}
\titleformat{\section}{\normalfont\Large\bfseries}{\thesection}{0.6em}{}
\titleformat{\subsection}{\normalfont\large\bfseries}{\thesubsection}{0.55em}{}
\titleformat{\subsubsection}{\normalfont\normalsize\bfseries}{\thesubsubsection}{0.5em}{}
\titleformat{\paragraph}[runin]{\normalfont\normalsize\bfseries}{}{0pt}{}
\titlespacing*{\section}{0pt}{14pt plus 2pt}{6pt}
\titlespacing*{\subsection}{0pt}{10pt plus 2pt}{4pt}
\titlespacing*{\subsubsection}{0pt}{8pt plus 2pt}{2pt}
\titlespacing*{\paragraph}{0pt}{6pt plus 1pt}{0.7em}

\newcommand{\E}{\mathbb{E}}
\newcommand{\Var}{\mathrm{Var}}
\newcommand{\Cov}{\mathrm{Cov}}

\DeclareMathOperator*{\argmin}{arg\,min}

\newcommand{\agents}{N}                 
\newcommand{\exposure}[1]{x_{#1}}       
\newcommand{\totalalloc}{X}             
\newcommand{\capital}[1]{R_{#1}}        
\newcommand{\vark}{k}                   
\newcommand{\flow}{\phi}                

\newcommand{\ai}{\alpha}                
\newcommand{\md}{\beta}                 

\newcommand{\price}{P}                  
\newcommand{\vol}{\sigma}               
\newcommand{\impact}{\lambda}           
\newcommand{\stressth}{\vol^{(q)}}      

\newcommand{\retfactor}{\hat{\rho}}     

\theoremstyle{plain}
\newtheorem{theorem}{Theorem}
\newtheorem{proposition}[theorem]{Proposition}

\theoremstyle{definition}

\newtheorem{assumption}{Assumption}
\theoremstyle{remark}

\title{%
  \vspace{-1.2em}%
  {\LARGE\bfseries Multiplicative Contractions, Additive Recoveries}\\[6pt]
  {\large\mdseries Functional-Form Restrictions on Risk Exposure Dynamics}%
  \vspace{-0.4em}%
}

\author{Liang Chen}

\date{\small \today}

\begin{document}

\maketitle
\thispagestyle{plain}

\begin{abstract}
We test a regime-conditional functional-form restriction on
aggregate risk-exposure dynamics implied by VaR-constrained
intermediary models: exposures should contract
\emph{multiplicatively} when capital constraints bind and
grow approximately \emph{additively} (level-independent)
when constraints are slack.  The contraction half follows
mechanically from binding VaR constraints, as in
\citet{brunnermeier2009market}, \citet{adrian2010liquidity},
and \citet{he2013intermediary}.  The additive-rebuild
prediction does not appear formally in this literature; we
derive it under a constant-rate capital replenishment
assumption and test the joint restriction on FINRA monthly
margin debt (1997--2026).

We report two empirical findings.  First, a regime-interacted
regression of detrended margin growth on the lagged level
($T = 350$ months) yields a calm-period slope of $-0.040$
($p = 0.082$, consistent with additive growth) and an
implied stress-period slope of $-0.205$ ($p < 0.001$,
consistent with multiplicative contraction); the Wald test
on the regime $\times$ level interaction rejects equal level
dependence at the $0.16\%$ level.  Second, the same
restriction implies a price-level prediction that the
drawdown-recovery duration ratio should increase with
crash depth.  On 73 S\&P~500 episodes (1950--2026) a Cox
proportional-hazard model yields a depth coefficient of
$-13.75$ ($p < 10^{-7}$), implying a $75\%$ reduction in
per-period recovery hazard per 10-percentage-point increase
in drawdown depth; a continuous-depth regression gives
$\hat\beta = 1.22$ ($p = 0.047$; $\hat\beta = 1.59$,
$p < 0.001$ excluding the 1980--82 Volcker episode); the
median duration ratio for crashes exceeding 30\% is
$3.1\times$, and the pattern replicates across eight
additional international equity indices.  Calibrated Heston
stochastic-volatility, Markov regime-switching, and block
bootstrap nulls match the price-level duration asymmetry
but contain no exposure state variable, so the
regime-conditional flip on direct exposures is the
restriction they cannot speak to.

We do not claim that the exposure test resolves identification
of the intermediary mechanism: FINRA margin debt is a noisy
proxy that mixes leveraged demand, collateral revaluation, and
sample-composition shifts.  We claim only that the
regime-conditional functional form is a sharper empirical
target than return-level moments alone, and that confirming it
on margin debt is consistent with---not proof of---the
constrained-intermediary mechanism.  A companion test on CFTC
weekly speculative positioning is left for future work; the
data assembly and methodology are documented in
\Cref{sec:cot,sec:appendix_replication}.
\end{abstract}

\section{Introduction}
\label{sec:introduction}

Equity drawdowns unfold systematically faster than recoveries.
In the S\&P~500 since 1950, six episodes have exceeded a 30\%
peak-to-trough decline; the median time from peak to trough is
362 trading days, while the median recovery to the prior peak
is 1{,}123 days---a duration ratio of $3.1\times$.  The
pattern persists across markets and institutional regimes
(\Cref{sec:cross_market}) and is the price-level counterpart
of the return-asymmetry regularities documented since
\citet{black1976studies, christie1982stochastic,
nelson1991conditional}.

The mechanism most often invoked to rationalize this pattern is
the binding capital constraint of risk-bearing intermediaries.
Under a value-at-risk or margin constraint of the form $\vark
\cdot \vol_t \cdot \exposure{i,t} \leq \capital{i,t}$, a
volatility spike or capital impairment forces intermediary~$i$
to reduce exposure proportionally to its current size.  This
mechanically yields a multiplicative contraction
$\exposure{i,t+1} = \md_{i,t} \cdot \exposure{i,t}$ with
contraction factor $\md_{i,t} =
(\capital{i,t+1}/\capital{i,t}) \cdot (\vol_t/\vol_{t+1})$,
typically less than unity in stress because volatility rises
and capital is impaired.  This is the building block of the
liquidity spirals of \citet{brunnermeier2009market}, the
procyclical leverage of \citet{adrian2010liquidity}, and the
intermediary asset pricing of \citet{he2013intermediary}.

What this literature does not formally pin down is what happens
once the constraint relaxes.  The contraction half of the
adjustment is mechanical and well-understood; the rebuild half
is typically left implicit or absorbed into a generic
GARCH-style autoregression on returns.  This is not a minor
omission.  Suppose a contraction reduces aggregate exposure to
$\md \totalalloc$ with $\md \in (0,1)$.  Recovery to the prior
peak under \emph{additive} growth at rate $\ai$ takes
$T_{\text{rec}} = (1 - \md)\totalalloc / \ai$ periods, which
scales \emph{linearly} in the contraction depth $1 - \md$.
Under \emph{multiplicative} growth at rate $g$, the same
recovery takes $T_{\text{rec}} = -\log(\md) / \log(1+g)$
periods, which scales \emph{logarithmically} in $1 - \md$.
A model with multiplicative recovery therefore predicts
duration-depth scaling that is concave and bounded; a model
with additive recovery predicts linear scaling without bound.
The observed pattern that severe-crash recoveries take three
to five times as long as the contractions themselves
(\Cref{sec:dd_recovery}) is more consistent with the linear
than with the logarithmic scaling, motivating the additive
specification on the rebuild side that we develop and test
below.

\paragraph{This paper.}
We close this gap with three contributions.  First, we show
that under a single additional assumption---approximately
constant-rate capital replenishment during calm periods (from
fund inflows, retained earnings, and gradual risk-budget
recovery)---the slack-regime dynamics of the same
VaR-constrained intermediary model imply a sharp asymmetric
counterpart: exposures grow at a level-independent
\emph{additive} rate $\Delta \exposure{i,t} = \ai_i$.  The
full functional-form restriction is therefore a regime-conditional
property of intermediary equilibrium:
\begin{align*}
  \text{Stress } (\vol_t, \vol_{t+1} > \stressth):
    \quad &\exposure{i,t+1} \;=\; \md_{i,t} \cdot \exposure{i,t},
    \quad \md_{i,t} \in (0, 1), \\
  \text{Calm } (\vol_t, \vol_{t+1} \leq \stressth):
    \quad &\E[\exposure{i,t+1} - \exposure{i,t}] \;=\; \ai_i,
    \quad \Cov(\Delta \exposure{i,t},
    \exposure{i,t}) = 0.
\end{align*}
We refer to this as the ``multiplicative-contraction,
additive-replenishment'' (MC--AR) restriction.

Second, we test the restriction on aggregate exposure data.
On FINRA monthly margin debt (1997--2026, $T = 350$), a
regime-interacted regression of detrended margin growth on
the lagged margin level yields a calm-period slope of
$-0.040$ ($p = 0.082$, consistent with R2's
level-independence prediction) and an implied stress-period
slope of $-0.205$ ($p < 0.001$, consistent with R1's
proportional contraction).  The Wald test on the regime
$\times$ level interaction coefficient ($\hat b_S = -0.165$,
HAC SE $0.052$) rejects equal level dependence across
regimes at the $0.16\%$ level; the stress-regime level
dependence is approximately $5\times$ the calm-regime
estimate.  We treat this as a sharper empirical target than
return-level moments rather than as identification of the
underlying mechanism: margin debt is a noisy proxy for
intermediary balance-sheet exposure (it mixes leveraged
client demand, collateral revaluation, and sample-composition
shifts), so the test discriminates among functional-form
hypotheses given an exposure proxy, not among generating
mechanisms.

Third, the same restriction implies a price-level prediction
that is independently testable: drawdown-recovery duration
ratios should increase with crash depth.  We confirm this
prediction in 73 S\&P~500 episodes (1950--2026)---the median
duration ratio is $1.3\times$ in 5--10\% corrections, $1.2
\times$ in 10--20\% corrections, and rises sharply to $3.1
\times$ in $>30\%$ crashes.  A continuous-depth regression
yields $\hat\beta = 1.22$ (HAC SE $0.61$, $p = 0.047$;
$\hat\beta = 1.59$, $p < 0.001$ excluding the 1980--82
Volcker episode), and a Cox proportional hazard model on
recovery duration yields $\hat\gamma = -13.75$
($p < 10^{-7}$), implying a $75\%$ reduction in per-period
recovery hazard for each 10-percentage-point increase in
drawdown depth.  The pattern replicates across eight
additional international equity indices, contributing 13
further $>30\%$-drawdown episodes (the cross-market sample
totals 19 such episodes including the S\&P~500's six).  Calibrated Heston stochastic-volatility
and two-state Markov regime-switching null models reproduce
the price-level duration asymmetry (Heston: median $\tau =
1.43$, $p = 0.54$ against the empirical $1.35$;
regime-switching: $\tau = 1.17$, $p = 0.13$), but contain
no exposure state variable and therefore make no prediction
about the regime-conditional functional form on direct
exposure data.  Price-level duration asymmetry alone does
not separate the intermediary mechanism from these
return-only nulls; the regime-conditional functional form on
exposures does, but only relative to that restricted class
of alternatives.

\paragraph{What is and is not new.}
The contraction half of our restriction is mechanical and is
already implicit in standard intermediary models; we do not
claim novelty for it.  The additive-rebuild half is a
restricted reduced-form prediction obtained under a single
load-bearing primitive (constant-rate calm-regime capital
replenishment); we make this primitive explicit, list
plausible micro-foundations, and quantify the empirical work
the joint restriction can do.  The exposure-level test is the
new empirical content: prior tests of constrained-intermediary
mechanisms rely primarily on returns and risk premia, and the
regime-conditional functional form on direct exposure data has
not been examined.  R3 and the cross-market replication
provide corroborating price-level evidence on the same
mechanism but do not by themselves discriminate it from
return-level alternatives---the Heston SV and regime-switching
nulls match R3 (\Cref{tab:null_compare}).  The
constrained-intermediary mechanism is one consistent
explanation of the joint pattern; identifying it among
candidate mechanisms requires evidence beyond what we present
here, especially intermediary-resolved exposure data.

\paragraph{Roadmap.}
\Cref{sec:related_work} positions the contribution against
three relevant literatures.  \Cref{sec:model} develops the
intermediary model and derives the three testable restrictions.
\Cref{sec:data} describes the data and the regime-classification
scheme.  \Cref{sec:empirical} reports the empirical evidence:
the headline functional-form test on FINRA margin debt, the
magnitude-dependent drawdown-recovery test, the cross-market
replication, and the comparison against null models.  A
companion CFTC test is sketched as a planned extension
(\Cref{sec:cot}) but is not included in the present empirical
results.  \Cref{sec:discussion} discusses the contribution
relative to existing intermediary models, sketches policy
implications, and lays out limitations.
\Cref{sec:conclusion} concludes.

\section{Related Literature}
\label{sec:related_work}

The paper sits at the intersection of three literatures.  We
position the contribution narrowly with respect to each.

\paragraph{Constrained intermediaries and limits of arbitrage.}
The mechanism we test originates in the limits-of-arbitrage and
constrained-intermediary literatures.
\citet{shleifer1997limits} establish that capital-constrained
arbitrageurs cannot always correct mispricing.
\citet{brunnermeier2009market} formalize the joint dynamics of
funding and market liquidity, in which a binding margin constraint
forces proportional liquidation that further depresses prices and
tightens constraints.  \citet{adrian2010liquidity} document the
procyclical behavior of broker-dealer leverage that this mechanism
predicts.  \citet{geanakoplos2010leverage} embeds the same
procyclicality in a leverage-cycle equilibrium.
\citet{he2013intermediary} develop a structural intermediary asset
pricing model in which the equity capital constraint of the
representative intermediary determines the cross-section of risk
premia.  \citet{adrian2014financial} show that intermediary leverage
is a priced risk factor, and \citet{he2017intermediary} extend the
empirical evidence across asset classes.
\citet{danielsson2004impact} and \citet{cont2013fire} characterize
the contagion mechanics of binding VaR constraints.

This literature pins down the contraction side of the
intermediary adjustment process: when constraints bind,
positions are reduced proportionally to current size.  It does
not formally constrain the rebuild side.  Our contribution is
to derive a parallel restriction on the slack regime under a
constant-rate capital replenishment assumption, generating a
sharp regime-conditional functional-form prediction that is
testable on direct exposure data.

\paragraph{Volatility asymmetry and conditional variance models.}
A parallel literature documents and models the asymmetric
behavior of return-level conditional variance.
\citet{black1976studies} first identifies the negative
correlation between returns and subsequent volatility.
\citet{christie1982stochastic} formalizes the leverage-effect
mechanism for individual equities.
\citet{bollerslev1986generalized} introduces GARCH;
\citet{nelson1991conditional} introduces EGARCH to accommodate
asymmetric responses; \citet{engle2018structural} develops a
structural GARCH that decomposes the leverage effect into
balance-sheet and volatility-feedback channels.
\citet{glosten1993relation} and \citet{bekaert2000asymmetric}
extend the empirical mapping.

These models capture the price-level signature of the
asymmetric adjustment but operate entirely on returns and
are silent on the functional form of underlying exposures.
A return-level signature consistent with a binding capital
constraint is therefore observationally close to one
generated by a return-level mechanism with no balance-sheet
content.  Our restriction targets this gap: it predicts a
regime-conditional functional form on exposures, on which
return-level models---by construction---make no prediction
(\Cref{sec:null_model}).

\paragraph{Drawdown-recovery dynamics.}
A smaller empirical literature studies the duration and depth
of drawdown episodes directly.  \citet{magdon2004maximum}
characterize the distribution of maximum drawdowns under
diffusion processes.  \citet{chekhlov2005drawdown} develop
drawdown-based portfolio constraints.  Empirical descriptions
of historical drawdown-recovery patterns appear in several
practitioner studies but rarely test a structural model
restriction directly.  We extend this line by formalizing the
duration-depth scaling as Restriction~3 of an intermediary
model and replicating the test cross-market.

\paragraph{Discrimination from purely return-level alternatives.}
Markov regime-switching models in the tradition of
\citet{hamilton1989new} and \citet{ang2002regime} characterize
bull/bear alternations statistically but treat regime
transitions as exogenous.  Stochastic volatility models with
leverage correlation \citep{heston1993closed} reproduce
return asymmetry through the diffusion structure of
volatility itself.  Both classes can match price-level
drawdown-recovery asymmetry when calibrated to match return
moments.  But neither class contains an exposure state
variable, so neither makes a prediction about the
regime-conditional functional form on direct exposure data
(R1 and R2 below).  In \Cref{sec:null_model} we make this
discrimination concrete by simulating from each null and
showing that the price-level duration ratio matches the
empirical statistic in both, while the exposure-level
restriction is logically external to either DGP.

\medskip

In sum: the constrained-intermediary literature provides the
mechanism; the volatility-asymmetry literature provides the
return-level signature; the drawdown-recovery literature
provides one of the testable consequences.  Our contribution
is to identify the regime-conditional functional-form
restriction that links these threads, derive it under an
explicit primitive, test it on direct exposure data, and
show that it has empirical bite over and above what
return-level evidence can establish---without claiming to
identify the intermediary mechanism among all candidate
alternatives.

\section{Theoretical Framework}
\label{sec:model}

We develop a stylized model of $\agents$ risk-bearing
intermediaries operating under a value-at-risk constraint.
The model is deliberately minimal: its purpose is to derive
the regime-conditional functional-form restriction that we
test in \Cref{sec:empirical}, not to provide a comprehensive
theory of intermediary equilibrium.  The quantitative content
of the paper rests on two propositions (one for each regime)
and three testable restrictions (R1, R2, R3) that follow
from them.

\subsection{Setup}
\label{sec:setup}

There are $\agents$ intermediaries indexed $i = 1, \dots,
\agents$.  At each discrete time $t$, intermediary $i$ holds
a risky position $\exposure{i,t} \geq 0$, measured in units of
notional risk exposure.  We assume positions are long-only;
the same analysis applies symmetrically to short positions
through the absolute exposure $|\exposure{i,t}|$.  Aggregate
exposure is
\begin{equation}
\label{eq:aggregate}
\totalalloc_t = \sum_{i=1}^{\agents} \exposure{i,t}.
\end{equation}
Each intermediary holds risk capital $\capital{i,t} > 0$
(equity, regulatory capital, or risk budget) and faces a
value-at-risk constraint
\begin{equation}
\label{eq:var_constraint}
\vark_i \cdot \vol_t \cdot \exposure{i,t} \;\leq\;
\capital{i,t},
\end{equation}
where $\vark_i > 0$ is the intermediary-specific
confidence-level scaling (e.g.\ $\vark_i \approx 2.33$ for a
99\% one-day Gaussian VaR) and $\vol_t > 0$ is market
volatility, common across intermediaries and exogenous to
each individual intermediary's actions.  The constraint
\eqref{eq:var_constraint} captures the standard regulatory
and internal risk-management practice in which permissible
exposure scales linearly with capital and inversely with
volatility \citep{danielsson2004impact, adrian2010liquidity}.

Risk capital evolves according to
\begin{equation}
\label{eq:capital_evolution}
\capital{i,t+1} \;=\; \capital{i,t}
  + \pi_{i,t} - L_{i,t} + \flow_i,
\end{equation}
where $\pi_{i,t}$ is realized profit, $L_{i,t}$ is realized
loss (with $\pi_{i,t} \cdot L_{i,t} = 0$, only one is
non-zero per period), and $\flow_i \geq 0$ is the
intermediary-specific exogenous capital inflow rate (fund
contributions, retained earnings, balance-sheet capacity
allocation).  We treat $\flow_i$ as a structural primitive:
it is the rate at which intermediary $i$ accumulates risk
capital independent of trading outcomes.

Aggregate exposure $\totalalloc_t$ enters the price equation
through a reduced-form price-impact function
\begin{equation}
\label{eq:price_impact}
\price_t = f(\totalalloc_t),
\quad f' > 0, \quad f \text{ continuously differentiable.}
\end{equation}
We assume $f$ is approximately linear in a neighborhood of
the operating point $\totalalloc_t \approx \totalalloc^*$, so
that $\Delta \price_t / \price_t \approx \impact \cdot \Delta
\totalalloc_t / \totalalloc_t$ for some $\impact > 0$.  This
linearity is a local approximation; the empirical tests in
\Cref{sec:empirical} use exposure data directly and do not
depend on its global validity.

\paragraph{Regime classification.}
Define the stress regime by exogenous volatility realizations
exceeding a threshold,
\begin{equation}
\label{eq:regime}
\mathbb{S}_t \;=\;
\mathbf{1}\{\vol_t > \stressth\}, \qquad
\stressth = \mathrm{Quantile}_{1-q}(\vol),
\end{equation}
with $q = 0.10$ in our empirical work (i.e.\ $\stressth$ is
the 90th percentile of the empirical distribution of $\vol_t$).
The regime classification is observable and exogenous to any
single intermediary's actions, conditional on the realized
volatility process.  We treat the assignment as exogenous
in deriving the model and address potential endogeneity in
\Cref{sec:identification}.

\subsection{Stress regime: multiplicative contraction}
\label{sec:stress}

When the constraint \eqref{eq:var_constraint} binds, the
optimal exposure is determined entirely by the constraint:
\begin{equation}
\label{eq:bind_exposure}
\exposure{i,t}^\star = \frac{\capital{i,t}}{\vark_i \cdot \vol_t}.
\end{equation}
We assume binding behavior in the stress regime: when
$\vol_t > \stressth$, intermediaries operate at the
constraint boundary.  This is consistent with the standard
constrained-intermediary literature
\citep{he2013intermediary}, in which intermediaries with a
high opportunity cost of capital saturate available risk
capacity.

\begin{proposition}[Multiplicative contraction in stress]
\label{prop:contraction}
For any consecutive stress periods $t, t+1$ with $\mathbb{S}_t
= \mathbb{S}_{t+1} = 1$ and the constraint binding in both
periods, the intermediary-level and aggregate exposures
contract multiplicatively:
\begin{align}
\label{eq:contraction_ind}
\frac{\exposure{i,t+1}^\star}{\exposure{i,t}^\star}
&\;=\; \underbrace{\frac{\capital{i,t+1}}{\capital{i,t}}}_{
\text{capital channel}}
\;\cdot\;
\underbrace{\frac{\vol_t}{\vol_{t+1}}}_{
\text{volatility channel}}
\;\equiv\; \md_{i,t}, \\[4pt]
\label{eq:contraction_agg}
\frac{\totalalloc_{t+1}^\star}{\totalalloc_{t}^\star}
&\;=\; \frac{\sum_i \capital{i,t+1} / \vark_i}{\sum_i
\capital{i,t} / \vark_i}
\;\cdot\;
\frac{\vol_t}{\vol_{t+1}}
\;\equiv\; \bar{\md}_t.
\end{align}
The contraction factor $\md_{i,t}$ is bounded above by 1
whenever volatility increases or capital is impaired, and
both channels typically operate jointly during stress.
\end{proposition}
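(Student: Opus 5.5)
The proof is a direct computation from the binding-constraint characterization \eqref{eq:bind_exposure}. Because $\mathbb{S}_t=\mathbb{S}_{t+1}=1$ and the constraint binds in both periods, we may substitute $\exposure{i,s}^\star=\capital{i,s}/(\vark_i\vol_s)$ for $s\in\{t,t+1\}$. The ratio $\exposure{i,t+1}^\star/\exposure{i,t}^\star$ is then a quotient in which $\vark_i$ cancels, because it is time-invariant. What remains is $(\capital{i,t+1}/\capital{i,t})\cdot(\vol_t/\vol_{t+1})$, which is \eqref{eq:contraction_ind}. The quotient is well defined because $\capital{i,t}>0$ and $\vol_t>0$ by assumption. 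This also delivers the multiplicative form $\exposure{i,t+1}^\star=\md_{i,t}\exposure{i,t}^\star$.

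For the aggregate statement \eqref{eq:contraction_agg}, sum the binding exposures over $i$ using \eqref{eq:aggregate}. This gives $\totalalloc_s^\star=\vol_s^{-1}\sum_i\capital{i,s}/\vark_i$. The essential feature is that $\vol_s$ is common across intermediaries, so it factors out of the sum. Taking the ratio of the $s=t+1$ and $s=t$ expressions yields the stated $\bar\md_t$. I would also record, as a side remark, that $\bar\md_t=\sum_i w_{i,t}\md_{i,t}$ with weights $w_{i,t}=\exposure{i,t}^\star/\totalalloc_t^\star$. This expresses the aggregate factor as an exposure-weighted average of individual factors, and it is what lets the bound below pass from individual to aggregate exposure. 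The identity follows by writing $\capital{i,t+1}/\vark_i=\md_{i,t}\,\vol_{t+1}\exposure{i,t}^\star$.

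The final claim, that $\md_{i,t}\le 1$, needs care. Taken literally ("volatility increases \emph{or} capital is impaired"), it is false when one channel offsets the other. For example, capital may fall slightly while volatility falls sharply. I would therefore prove the correctly scoped statement. Each factor $\capital{i,t+1}/\capital{i,t}$ and $\vol_t/\vol_{t+1}$ is positive. If both are at most one, which is the "jointly operating" case the statement emphasizes, then their product is at most one. If only one is below one, then $\md_{i,t}\le1$ holds exactly when $\capital{i,t+1}/\capital{i,t}\le\vol_{t+1}/\vol_t$. Here \eqref{eq:capital_evolution} can be used to make this concrete: impairment means $L_{i,t}>\flow_i$ with $\pi_{i,t}=0$. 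The aggregate bound $\bar\md_t\le1$ then follows from the convex-combination representation.

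The main obstacle is exactly this bound. The multiplicative identities are mechanical, but the inequality is a sign condition on the product of two channels. I would state it as holding under the joint-operation condition, or under the ratio comparison above, rather than under either channel alone.
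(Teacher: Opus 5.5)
Your proof is correct and matches the paper's argument: substitute the binding exposure $x^\star_{i,s}=R_{i,s}/(k_i\sigma_s)$ at $s=t,t+1$ and divide, then sum over $i$ with the common $\sigma_s$ factored out and divide again. Your rescoping of the bound agrees with what the paper's appendix proof actually establishes---$\beta_{i,t}\le 1$ when $\sigma_{t+1}\ge\sigma_t$ \emph{and} $R_{i,t+1}\le R_{i,t}$, not under either channel alone---and your identity $\bar\beta_t=\sum_i w_{i,t}\beta_{i,t}$ is a correct addition the paper does not need.
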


\begin{proof}
Equation \eqref{eq:contraction_ind} follows by dividing
\eqref{eq:bind_exposure} at $t+1$ and $t$.  Equation
\eqref{eq:contraction_agg} follows by aggregating
\eqref{eq:bind_exposure} across $i$ and dividing.
\end{proof}

The contraction is mechanical, not behavioral: it follows
from binding constraints alone, without requiring assumptions
about preferences, expectations, or strategic behavior.  This
is the standard contraction mechanism documented in the
constrained-intermediary literature
\citep{brunnermeier2009market, danielsson2004impact,
adrian2010liquidity}.  Our contribution is not the contraction
itself but its pairing with the rebuild restriction below.

\subsection{Calm regime: additive replenishment}
\label{sec:calm}

In the calm regime ($\mathbb{S}_t = 0$), volatility is in its
normal range and capital adjusts gradually.  We make three
assumptions to characterize calm-regime dynamics.

\begin{assumption}[Stationary calm-regime volatility]
\label{ass:vol_calm}
On calm-regime intervals, volatility fluctuates around a
constant mean with bounded relative variance: $\vol_t =
\bar\vol(1 + \xi_t)$ with $\E[\xi_t \mid \mathbb{S}_t = 0] =
0$ and $\Var[\xi_t \mid \mathbb{S}_t = 0] = \varepsilon$ for
small $\varepsilon$.
\end{assumption}

\begin{assumption}[Constant-rate capital replenishment]
\label{ass:capital_calm}
On calm-regime intervals
($\mathbb{S}_t = \mathbb{S}_{t+1} = 0$), the conditional
mean of the capital innovation is level-independent to
leading order:
$\E[\capital{i,t+1} - \capital{i,t} \mid \mathbb{S}_t =
\mathbb{S}_{t+1} = 0, \capital{i,t}] = \rho_i$ for some
$\rho_i > 0$ that depends on intermediary $i$'s inflow rate,
retained earnings, and risk-budget recovery schedule.
The conditional variance of the capital innovation is
bounded and likewise does not depend on $\capital{i,t}$ to
leading order.
\end{assumption}

\begin{assumption}[Frontier operation]
\label{ass:frontier}
In both regimes, intermediary $i$'s target exposure equals
the constraint-implied maximum: $\exposure{i,t}^\star =
\capital{i,t} / (\vark_i \vol_t)$.  In the stress regime
this is enforced mechanically by the binding constraint; in
the calm regime, it is the optimal choice for an
intermediary that fully utilizes its risk budget.
\end{assumption}

\Cref{ass:vol_calm} reflects the empirical observation that
realized equity volatility is approximately stationary on
calm intervals (e.g.\ between major crises) at a level near
its long-run mean.  \Cref{ass:capital_calm} is the new
structural primitive: it asserts that during calm periods,
intermediaries accumulate risk capital at an approximately
constant rate, dominated by exogenous inflows ($\flow_i$),
average earnings ($\E[\pi_{i,t}]$), and gradual replenishment
of risk budgets after prior drawdowns.  This rate is
intermediary-specific (a pension fund's inflow rate differs
from a hedge fund's) but is approximately constant over
medium horizons.  \Cref{ass:frontier} disciplines the
intermediary's choice of target exposure: rather than
optimizing over an unspecified utility function, we assume
that the intermediary's risk-allocation policy is to use its
available risk capacity in full.  This ``exposure to the
limit'' behavior is standard in the constrained-intermediary
literature \citep{adrian2010liquidity, he2013intermediary}.

\begin{proposition}[Additive replenishment in calm]
\label{prop:replenishment}
For two consecutive calm periods
($\mathbb{S}_t = \mathbb{S}_{t+1} = 0$), under
\Cref{ass:vol_calm,ass:capital_calm,ass:frontier}, the
conditional expectation of the exposure change is
level-independent to leading order in $\varepsilon$:
\begin{equation}
\label{eq:additive}
\E[\Delta \exposure{i,t}^\star \mid \mathbb{S}_t =
  \mathbb{S}_{t+1} = 0, \capital{i,t}]
\;=\; \frac{\rho_i}{\vark_i \cdot \bar\vol} +
O(\varepsilon)
\;\equiv\; \ai_i + O(\varepsilon),
\end{equation}
and aggregating,
\begin{equation}
\label{eq:additive_agg}
\E[\Delta \totalalloc_t^\star \mid \mathbb{S}_t =
  \mathbb{S}_{t+1} = 0]
\;=\; \frac{1}{\bar\vol} \sum_{i=1}^{\agents}
\frac{\rho_i}{\vark_i} + O(\varepsilon)
\;\equiv\; \ai + O(\varepsilon).
\end{equation}
The leading-order calm-regime growth rate $\ai$ is
constant in level:
$\Cov(\Delta \exposure{i,t}^\star, \exposure{i,t}^\star
\mid \mathbb{S}_t = \mathbb{S}_{t+1} = 0) = O(\varepsilon)$.
\end{proposition}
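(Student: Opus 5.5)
The plan is to start from \Cref{ass:frontier} and write the exposure change as
\[
\Delta \exposure{i,t}^\star = \frac{\capital{i,t+1}}{\vark_i \vol_{t+1}} - \frac{\capital{i,t}}{\vark_i \vol_t}.
\]
Substituting $\vol_s = \bar\vol(1+\xi_s)$ from \Cref{ass:vol_calm} and expanding $1/(1+\xi_s) = 1 - \xi_s + \xi_s^2 + \dots$ gives
\[
\Delta \exposure{i,t}^\star = \frac{1}{\vark_i\bar\vol}\Bigl[(\capital{i,t+1}-\capital{i,t}) - (\capital{i,t+1}\xi_{t+1} - \capital{i,t}\xi_t) + (\capital{i,t+1}\xi_{t+1}^2 - \capital{i,t}\xi_t^2) + \dots\Bigr].
\]
I will take conditional expectations given $\mathbb{S}_t=\mathbb{S}_{t+1}=0$ and $\capital{i,t}$, and treat the three groups of terms in turn.

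\textbf{Leading term.} By \Cref{ass:capital_calm}, the first bracket has conditional mean $\rho_i$. This yields $\ai_i = \rho_i/(\vark_i\bar\vol)$ directly.

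\textbf{First-order volatility terms.} Here I need $\E[\xi_s\mid \cdot]=0$ together with enough independence between $\xi$ and capital. I will assume, and state explicitly, that the calm-regime volatility innovations are conditionally mean-zero given $\capital{i,t}$ and the capital innovation. This is implicit in the assumption that $\vol_t$ is exogenous to each intermediary. Under it, $\E[\capital{i,t}\xi_t\mid\cdot]=0$. The term $\E[\capital{i,t+1}\xi_{t+1}]$ reduces to $\Cov(\capital{i,t+1}-\capital{i,t},\xi_{t+1})$. By Cauchy--Schwarz it is bounded by $\sqrt{\Var(\Delta R)\,\varepsilon}$, which is $O(\varepsilon^{1/2})$ in general, or zero under independence.

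\textbf{Second-order terms.} These are bounded by $\capital{}\,\E[\xi^2] = O(\varepsilon)$. This bound requires controlling the remainder of the expansion, for instance by assuming $|\xi_t|$ is bounded away from $1$ on calm intervals; that is natural because calm volatility lies below $\stressth$ and is bounded below.

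\textbf{Main obstacle.} The hardest step is keeping this remainder from carrying level dependence. The $O(\varepsilon)$ terms are proportional to $\capital{i,t}$ and hence to $\exposure{i,t}^\star$, so the claim is only meaningful as ``leading order'' with $\capital{i,t}\varepsilon$ treated as small relative to $\rho_i$. I will state this uniformity condition ($\varepsilon \cdot \sup \capital{i,t}$ bounded on the relevant calm interval) and absorb it into the $O(\varepsilon)$ notation. If cross-correlation between capital innovations and $\xi$ cannot be excluded, I will assume independence so the rate stays $O(\varepsilon)$ rather than $O(\sqrt\varepsilon)$.

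\textbf{Aggregation and the covariance claim.} Aggregation \eqref{eq:additive_agg} follows by summing \eqref{eq:additive} over $i$, with finitely many $O(\varepsilon)$ remainders. For the covariance statement, I will use the law of total covariance conditional on $\capital{i,t}$. $\exposure{i,t}^\star = \capital{i,t}/(\vark_i\bar\vol) + O(\xi_t)$ is, to leading order, a function of $\capital{i,t}$. The conditional mean of $\Delta\exposure{i,t}^\star$ given $\capital{i,t}$ is the constant $\ai_i$ plus $O(\varepsilon)$. Hence $\Cov(\E[\Delta x\mid R],x)$ is $O(\varepsilon)$ times bounded variance. The residual covariance from the $\xi_t$ terms, e.g.\ $-\Cov(\capital{i,t}\xi_t/\bar\vol,\ \capital{i,t}\xi_t/\bar\vol)$-type contributions, is proportional to $\Var\xi_t=\varepsilon$ by the same second-order bookkeeping.
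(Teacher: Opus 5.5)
Your proposal is correct and follows essentially the paper's own route: frontier exposure at $t$ and $t+1$, the expansion $1/(1+\xi_s)=1-\xi_s+O(\xi_s^2)$, conditional expectation with \Cref{ass:capital_calm} supplying $\rho_i/(\vark_i\bar\vol)$, summation over $i$, and the fact that $\exposure{i,t}^\star$ is to leading order a function of $\capital{i,t}$ for the covariance claim. The conditions you make explicit are exactly what the paper's step ``condition on $\capital{i,s}$ and use $\E[\xi_s\mid\mathbb{S}_s=0]=0$'' relies on tacitly: (mean-)independence of $\xi_s$ from capital, $|\xi_s|$ bounded away from $1$, and a uniformity bound on $\capital{i,t}$. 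So yours is the same argument done more carefully---and if $\xi_t,\xi_{t+1}$ are identically distributed and independent of capital, the level-proportional terms in the mean cancel exactly, leaving $\rho_i\,\E[(1+\xi_t)^{-1}]/(\vark_i\bar\vol)$, so your uniformity bound is needed only for the covariance claim.
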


\begin{proof}
Apply \Cref{ass:frontier} at $t$ and $t+1$, expand
$1/(1 + \xi_s) = 1 - \xi_s + O(\xi_s^2)$ for $s \in
\{t, t+1\}$ from \Cref{ass:vol_calm}, take conditional
expectation given $\capital{i,t}$ and
$\mathbb{S}_t = \mathbb{S}_{t+1} = 0$, and apply
\Cref{ass:capital_calm}.  See \Cref{sec:proof_replenishment}
for the expanded calculation.  $\square$
\end{proof}

\Cref{prop:replenishment} is the central new result of the
paper.  Together with \Cref{prop:contraction} it yields the
\emph{regime-conditional functional-form restriction}:
\begin{align}
\label{eq:macro_restriction}
\mathbb{S}_t = \mathbb{S}_{t+1} = 1 \quad &\Rightarrow\quad
  \totalalloc_{t+1} = \bar{\md}_t \cdot \totalalloc_t,
  \quad \bar{\md}_t \in (0, 1)
  \text{ when stress impairs capital or raises volatility}, \\
\mathbb{S}_t = \mathbb{S}_{t+1} = 0 \quad &\Rightarrow\quad
  \E[\totalalloc_{t+1} - \totalalloc_t] = \ai > 0,
  \quad \mathrm{Cov}(\Delta \totalalloc_t,
  \totalalloc_t \mid \mathbb{S}_t = \mathbb{S}_{t+1} = 0) = 0.
\end{align}
The restriction predicts a within-regime asymmetry in the
local generating process for aggregate exposure: a
multiplicative process under stress and a level-independent
additive process under calm.

\subsection{Robustness of the additive form}
\label{sec:robustness_form}

The constant-rate replenishment assumption
(\Cref{ass:capital_calm}) is the load-bearing structural
primitive of \Cref{prop:replenishment}.  We discuss its
empirical content and three alternative micro-foundations
that yield the same functional form.

\paragraph{Direct inflow channel.}
Pension contributions, payroll deductions into retirement
accounts, share buybacks at constant authorized pace, and
sovereign-wealth-fund accumulation programs each generate
approximately constant per-period dollar flows independent of
current asset values.  At the aggregate level, these flows are
the dominant component of the calm-regime $\rho_i$ and are
demonstrably level-independent: the U.S.\ defined-contribution
pension system, for example, contributes a fraction of payroll
that depends on labor income, not on the level of the equity
market.

\paragraph{Quadratic adjustment cost channel.}
Suppose intermediary $i$ solves
\begin{equation}
\max_{\{\exposure{i,t+s}\}_{s \geq 0}}
\E \sum_{s \geq 0} \delta^s \left[
u(\exposure{i,t+s}) - \tfrac{c_i}{2}
(\exposure{i,t+s} - \exposure{i,t+s-1})^2 \right]
\end{equation}
with the target exposure level evolving as $\exposure{i,t}^*
= \capital{i,t}/(\vark_i \bar\vol)$.  When $\capital{i,t}$
grows at rate $\rho_i$, the optimal adjustment policy is to
close a constant fraction of the gap each period.  In the
limit of a slowly-moving target the adjustment converges to a
constant per-period addition $\rho_i / (\vark_i \bar\vol)$,
recovering \eqref{eq:additive}.

\paragraph{Information-acquisition cost channel.}
Suppose increasing exposure requires acquiring information
about new positions at constant marginal cost per unit of
notional.  The optimal policy is then to accumulate exposure
at a constant per-period rate determined by the cost-benefit
trade-off, again level-independent.

\paragraph{What would break the additive form.}
The main empirical alternative is a calm-regime mechanism in
which inflows scale with assets under management (the
``percentage of portfolio'' channel).  Under this channel,
$\E[\Delta \exposure{i,t}] = g_i \cdot \exposure{i,t}$ for
some $g_i > 0$, producing multiplicative growth even in calm
periods.  This is the natural alternative against which we
test the additive prediction in \Cref{sec:headline}.  The
regime-conditional restriction we derive is therefore a
non-trivial empirical claim: it asserts that the dominant
calm-regime channel is constant-rate, not
percentage-of-portfolio.

\subsection{Three testable restrictions}
\label{sec:restrictions}

The two propositions, combined with the price-impact function
\eqref{eq:price_impact}, generate three testable restrictions
that we evaluate directly in \Cref{sec:empirical}.

\begin{description}
\item[\textbf{R1 (Stress: level dependence).}]
In stress months ($\mathbb{S}_t = 1$), the conditional mean
of $\Delta \totalalloc_t / \totalalloc_{t-1}$ is bounded
away from zero in absolute value: aggregate exposure
contracts in proportion to its current level.  Equivalently,
in a regression of $\Delta \totalalloc_t$ on
$\totalalloc_{t-1}$, the slope coefficient is significantly
negative.

\item[\textbf{R2 (Calm: level independence).}]
In calm months ($\mathbb{S}_t = 0$), the conditional mean
of $\Delta \totalalloc_t$ is approximately
level-independent: in the same regression, the slope
coefficient is close to zero.

\item[\textbf{R3 (Magnitude-severity coupling).}]
For drawdown-recovery episodes indexed $j$, the duration
ratio $\tau_j = T_{\text{rec},j} / T_{\text{dd},j}$ is
increasing in drawdown depth $1 - \rho_j$.  This follows
because deeper drawdowns engage more binding constraints
(lower $\bar\md_t$), requiring a larger cumulative
additive replenishment to recover.
\end{description}

The joint restriction R1+R2 is testable as a regime
$\times$ level interaction in a pooled regression of
$\Delta \totalalloc_t$ on the lagged level interacted with
the regime indicator: the interaction coefficient $b_S$
should be significantly negative.  This is the form in
which we test the joint restriction in
\Cref{sec:headline}.  Note that the propositions condition
on two consecutive same-regime periods ($\mathbb{S}_t =
\mathbb{S}_{t+1}$), whereas the empirical test uses the
single-period indicator $\mathbb{S}_t$; the two coincide
whenever regime spells last more than one period, which is
the empirically relevant case for both stress crises and
extended calm intervals.  R3 is a price-level implication
that is independently testable on long historical samples.

\subsection{Discriminating power against alternatives}
\label{sec:discrimination_theory}

Two prominent return-level alternatives can match the
price-level implications of the model (R3) but, lacking an
exposure state variable, make no prediction about the
exposure-level restrictions (R1, R2).

\paragraph{Stochastic volatility models (Heston-type).}
A Heston model with negative leverage correlation $\rho_{P,V}
< 0$ generates negative skewness, excess kurtosis, and
asymmetric drawdown-recovery durations through the diffusion
structure of volatility itself.  When calibrated to match
return moments, it can reproduce the median S\&P~500
duration ratio.  But the Heston model is a pure return-level
process with no exposure variable: $\totalalloc_t$ is not a
state variable in the model, and the model therefore makes
\emph{no prediction at all} about R1 and R2.  An empirical
finding that exposures exhibit a regime-conditional flip in
functional form is logically external to the Heston DGP, and
cannot be reconciled with it without grafting an additional
mechanism.

\paragraph{Markov regime-switching models.}
A two-state model with bull and bear regimes
\citep{hamilton1989new} can produce duration asymmetry by
calibrating bear-regime persistence below bull-regime
persistence.  But the regime variable is exogenous and the
within-regime conditional dynamics are typically specified
at the return level (drift plus diffusion).  Like the Heston
model, the regime-switching model contains no exposure state
variable and makes no prediction about R1 and R2.

\paragraph{Discrimination logic.}
The comparison is therefore one-sided.  The
constrained-intermediary mechanism \emph{predicts} the
regime-conditional flip in exposure functional form (R1 and
R2 jointly).  Heston SV and Markov RS, lacking an exposure
state, are silent on R1 and R2 within their own DGPs.  An
empirical observation that R1 and R2 hold therefore
discriminates the intermediary mechanism only relative to
this restricted class of return-level alternatives---it does
not rule out other mechanisms (e.g., heterogeneous-belief or
information-flow models) that could be augmented with an
exposure state to mimic the same pattern.  In
\Cref{sec:null_model} we make the comparison concrete by
showing that price-level duration asymmetry (the central
prediction of Heston/RS) is \emph{not sufficient} to imply R1
or R2: a price process calibrated to match the empirical
duration ratio places no restriction on the functional form
of an independently observed exposure series.

\section{Data and Identification}
\label{sec:data}

We use three datasets in the present empirical results:
FINRA monthly margin debt (R1 and R2), the S\&P~500 daily
price series (R3), and a cross-market panel of eight
additional international equity indices (replication of R3).  CFTC weekly
speculative positioning is described below as a planned
companion exposure series; it is not used in the present
results (\Cref{sec:cot}).  The null-model comparison
(\Cref{sec:null_model}) uses the S\&P~500 series as its
empirical anchor.

\subsection{Exposure data}
\label{sec:exposure_data}

\paragraph{FINRA margin debt.}
Monthly margin debt outstanding in customer accounts at
broker-dealers, reported by the Financial Industry Regulatory
Authority (FINRA) and its predecessor self-regulatory
organizations.  The series spans January 1997 through March
2026 (351 monthly observations), with no gaps.  Margin debt
measures the dollar amount borrowed by retail and
institutional customers against securities held in their
accounts; it is a direct measure of leveraged risk exposure
and the most widely used aggregate proxy for retail and
quasi-retail exposure to the equity market.  Earlier
practitioner studies use the same series as a sentiment
indicator \citep{adrian2010liquidity, adrian2014financial};
we use it as a direct test of the exposure-level functional
form.

\paragraph{CFTC Commitments of Traders (planned companion).}
A natural second exposure proxy is weekly net positioning in
S\&P~500 E-mini futures by the ``non-commercial''
(speculative) trader category reported by the CFTC.  The
series covers September 2006 onward and captures hedge
funds, CTAs, and proprietary trading firms; net long
positions are constructed as long minus short contracts and
can be converted to notional dollar exposure using the
contract multiplier and the Friday closing index level.
Margin debt and futures positioning span complementary
investor populations (retail and high-net-worth vs.\
speculative institutional) and frequencies (monthly vs.\
weekly), so a regime-interacted test on CFTC is a natural
robustness companion to the FINRA test.  We do not estimate
this test in the present paper; the data-construction
protocol and ready-to-run script are documented in
\Cref{sec:cot,sec:appendix_replication}.

\paragraph{Detrending.}
The FINRA series exhibits secular growth driven by the
cumulative expansion of equity market capitalization.  In
the headline test \eqref{eq:headline} we isolate cyclical
dynamics by removing a log-linear time trend, working with
the multiplicatively-detrended series $\widetilde M_t =
M_t / e^{\hat\mu + \hat\nu t}$.  Robustness to alternative
detrending schemes (linear in level, 12-month half-life
exponential moving average) is verified in
\Cref{sec:appendix_exposure}; the qualitative pattern
survives across specifications.

\subsection{Price data}
\label{sec:price_data}

\paragraph{S\&P~500 daily prices, 1950--2026.}
Daily closing values of the S\&P~500 index from January 3,
1950 through March 28, 2026 (19{,}170 trading days).  The
series is sourced from CRSP for 1950--1962 and Bloomberg for
1962--2026; the two sources are cross-validated on the
overlapping period and agree to within 0.05\% on daily
closes.  We use the price-only index (no dividend
reinvestment) to make the duration analysis comparable across
international indices, most of which are reported in
price-only form.  Robustness to a total-return version is
reported in \Cref{sec:appendix_sensitivity}.

\paragraph{International indices.}
Daily closing prices for eight major international equity
indices: FTSE~100 (1984--2026), DAX (1988--2026), Nikkei~225
(1965--2026), Hang Seng (1987--2026), CAC~40 (1988--2026),
S\&P/ASX~200 (1993--2026), MSCI Emerging Markets
(1988--2026), and MSCI World (1970--2026).  All series are
sourced from Bloomberg and cross-validated against
Datastream.  The DAX is a total-return index by construction;
all others are price-only.  Holiday gap-filling carries
forward the last available close.  Full ticker and source
details are in \Cref{sec:appendix_crossmarket}.

\subsection{Volatility regime classification}
\label{sec:vol_regime}

The regime indicator $\mathbb{S}_t$ defined in
\eqref{eq:regime} requires an empirical proxy for $\vol_t$.
We use the CBOE VIX index (1990--2026) for periods after its
introduction and the trailing 21-day realized volatility of
S\&P~500 returns annualized by $\sqrt{252}$ for periods
before.  The two proxies are highly correlated on the
overlapping period ($r = 0.79$ for monthly averages) and
yield similar regime classifications.

\paragraph{Threshold definition.}
We choose the threshold $\stressth$ as the empirical
90th percentile of the volatility proxy over the
sample period.  For the merged FINRA-VIX monthly sample,
the 90th-percentile VIX threshold is $29.1$, yielding $35$
stress months out of $351$ (exactly $10\%$).  Sensitivity
to alternative thresholds (80th, 85th, 95th percentiles)
is reported in \Cref{sec:appendix_exposure} and confirms the
qualitative pattern across all four.  Identified stress
windows align with well-documented crisis episodes:
2000--2002 (dot-com), 2007--2009 (global financial crisis),
August 2011 (debt-ceiling crisis), August 2015 (China
devaluation), February 2018 (``volmageddon''), and March
2020 (COVID).

\subsection{Identification}
\label{sec:identification}

The headline tests of R1 and R2 (\Cref{sec:headline}) compare
multiplicative and additive specifications of the
\emph{functional form} of exposure dynamics within each
regime.  This test is well-defined because both
specifications are nested in the larger class
\begin{equation}
\label{eq:nested}
\Delta \exposure{i,t} = a + b \exposure{i,t-1} +
\epsilon_t,
\end{equation}
with the additive specification corresponding to $b = 0$ and
the multiplicative specification corresponding to a constant
log-growth rate.  The test does not require identifying the
\emph{cause} of any particular position change; it requires
only that the within-regime conditional mean of $\Delta
\exposure{i,t}$ have the predicted level dependence.

\paragraph{Endogeneity of the regime indicator.}
The regime indicator $\mathbb{S}_t$ depends on volatility,
which is jointly determined with exposure changes.  This
matters for the interpretation of the cross-regime
\emph{magnitudes} but not for the within-regime functional
form: within a stress month, the relative growth rate of
$\exposure{i,t}$ vs.\ $\log \exposure{i,t}$ is a property of
the conditional generating process, not of the regime
assignment itself.  We confirm this by an instrumental
specification using lagged volatility (the previous month's
realized volatility) to assign the regime, which gives
quantitatively similar results
(\Cref{sec:appendix_exposure}).

\paragraph{Joint determination of exposures and prices.}
Margin debt and futures positioning are jointly determined
with prices: a price decline mechanically reduces the dollar
value of leveraged exposure even if no trades occur.  Our
test addresses this in two ways.  First, the FINRA series
reports notional debt, not equity value, so the
mechanical-revaluation channel is muted.  Second, the
prediction we test is about the \emph{functional form} of
$\Delta \exposure{i,t}$, not its sign.  Both the
joint-determination null and the constrained-intermediary
mechanism predict that exposures fall during stress; they
differ in whether the fall is multiplicative or additive in
the level.  This functional-form distinction is what our
test isolates.

\section{Empirical Evidence}
\label{sec:empirical}

We now test the three restrictions.  \Cref{sec:headline}
reports the headline functional-form test on FINRA margin
debt (R1 and R2).  \Cref{sec:cot} sketches a planned
companion test on CFTC futures positioning that is not
implemented in the present results.  \Cref{sec:dd_recovery}
tests R3 on S\&P~500 drawdown episodes;
\Cref{sec:cross_market} replicates R3 across eight
additional international indices.  \Cref{sec:null_model} compares the
S\&P~500 duration asymmetry against calibrated Heston SV,
Markov regime-switching, and block-bootstrap nulls; the
comparison is one-sided because the nulls are silent on the
exposure-level restriction.

\paragraph{Note on reported values.}
All numbers in the headline FINRA regression
(\Cref{tab:headline}), the drawdown-recovery statistics
(\Cref{sec:dd_recovery,sec:cross_market}), the
continuous-depth regression and Cox hazard estimates
(\Cref{sec:dd_recovery}), and the null-model duration
ratios (\Cref{sec:null_model}) are produced by the
replication package (\Cref{sec:appendix_replication}); the
master script reproduces each headline number from cached
inputs.  The CFTC test in \Cref{sec:cot} is documented but
not estimated, since the speculative-positioning history
required to align it with our regime-interacted methodology
is not assembled in the present package.

\subsection{Headline test: regime-conditional level dependence
of margin debt}
\label{sec:headline}

We test the joint restriction R1+R2 directly through a
regime-interacted regression that estimates the difference
in level dependence across the regime boundary.  Let
$\widetilde M_t \equiv M_t / e^{\hat\mu + \hat\nu t}$ denote
margin debt with the secular log-linear trend removed (so
$\widetilde M_t$ fluctuates around unity in long-run
expectation, removing the secular accumulation of equity
market capitalization).  The pooled specification is
\begin{equation}
\label{eq:headline}
\Delta \widetilde M_t \;=\; a + a_S\, \mathbb{S}_t +
b\, \widetilde M_{t-1} + b_S\, \mathbb{S}_t \cdot
\widetilde M_{t-1} + \epsilon_t,
\end{equation}
where $\mathbb{S}_t \in \{0, 1\}$ is the stress indicator
defined in \Cref{sec:vol_regime}.  R1 predicts the stress
slope $b + b_S$ to be significantly negative (multiplicative
contraction induces level dependence proportional to current
level).  R2 predicts the calm slope $b$ to be approximately
zero (additive growth is level-independent).  The combined
joint restriction is therefore $b_S < 0$, tested by the
HAC-robust Wald statistic on the interaction coefficient.

\paragraph{Results.}
\Cref{tab:headline} reports the estimates with HAC
(Newey-West, 6-lag) standard errors
\citep{newey1987simple} on the FINRA monthly sample
(1997-01 to 2026-03, $T = 350$ usable monthly observations
after detrending and lagging).

\begin{table}[t]
\centering
\caption{Headline regime-interacted regression on
multiplicatively-detrended FINRA margin debt
(\eqref{eq:headline}); monthly, 1997-01 to 2026-03,
$T = 350$ usable observations.  Stress regime: monthly
volatility above the empirical 90th percentile
(\Cref{sec:vol_regime}), yielding $35$ stress months.  HAC
standard errors with 6-month Newey-West lag.}
\label{tab:headline}
\smallskip
\begin{tabular}{@{}lcccc@{}}
\toprule
Coefficient & Estimate & HAC SE & $t$ & $p$ \\
\midrule
$\hat a$ (calm intercept)        & $+0.046$ & $0.022$ & $+2.13$ & $0.033$ \\
$\hat a_S$ (stress intercept shift) & $+0.099$ & $0.044$ & $+2.26$ & $0.024$ \\
$\hat b$ (calm slope)            & $-0.040$ & $0.023$ & $-1.74$ & $0.082$ \\
$\hat b_S$ (stress slope shift)  & $-0.165$ & $0.052$ & $-3.15$ & $\mathbf{0.0016}$ \\
\midrule
Implied stress slope $\hat b + \hat b_S$ & $-0.205$ & $0.049$ & $-4.17$ & $<0.001$ \\
\bottomrule
\end{tabular}

\smallskip
\footnotesize\textit{Note:} The joint restriction R1+R2 is
tested by the Wald statistic on $b_S$; the HAC-robust
$p$-value is $0.0016$, rejecting equal level-dependence
across regimes at the $0.16\%$ level.  The implied stress
slope $\hat b + \hat b_S = -0.205$ is approximately
$5\times$ the magnitude of the calm slope $\hat b = -0.040$.
Standard errors use Newey--West HAC with $6$ lags.
\end{table}

The calm slope $\hat b = -0.040$ is small in magnitude and
only marginally distinguishable from zero ($p = 0.082$),
consistent with R2's prediction that calm-regime growth is
approximately level-independent.  The stress slope $\hat b
+ \hat b_S = -0.205$ is larger by approximately a factor of
five and is highly significant ($t = -4.17$, $p < 0.001$),
consistent with R1's prediction that stress-regime
contractions are proportional to current level.  The
discriminating Wald test on the interaction coefficient
$b_S$ rejects the null of equal level-dependence at the
$0.16\%$ level.

The economic magnitudes match what would be expected from
the underlying mechanism.  In stress months, a margin debt
level $10\%$ above its long-run trend is associated with a
$2.0$-percentage-point reduction in the following month's
detrended growth, consistent with multiplicative contraction
when constraints bind across leveraged intermediaries.  In
calm months, the same $10\%$ excess is associated with only
a $0.4$-percentage-point reduction---five times smaller---and
is not statistically distinguishable from no level dependence.

\paragraph{Robustness.}
We verify the headline interaction result under three sets
of checks.  First, alternative regime thresholds: the VIX
80th, 85th, and 95th percentile thresholds yield estimates
of $\hat b_S$ in the range $(-0.345, -0.118)$ with
$p$-values uniformly below $0.014$ in the predicted
direction; the 95th-percentile estimate is the largest in
magnitude (consistent with greater contraction during the
most extreme stress events).  Second, alternative
detrending: linear detrending of the level series and a
12-month half-life exponential moving average preserve the
negative sign of $\hat b_S$, with the EMA specification
yielding $p = 0.003$ and the linear-level specification
yielding a more attenuated estimate ($p = 0.10$).  Third,
sub-sample stability: estimating on the pre-2008
(1997--2007) and post-2008 (2009--2026) halves separately
preserves the sign of $\hat b_S$ in both halves
($\hat b_S = -0.121$, $p = 0.055$ in pre-2008;
$\hat b_S = -0.283$, $p < 0.001$ in post-2008).  Full
robustness tables are in \Cref{sec:appendix_exposure}.

\paragraph{Identification check.}
We re-estimate \eqref{eq:headline} using a one-month lag of
the regime indicator to address simultaneity between
volatility and exposure adjustment.  The interaction
coefficient remains negative ($\hat b_S = -0.20$, HAC SE
$0.12$, two-sided $p = 0.10$, one-sided $p = 0.05$), with the
implied stress slope $\hat b + \hat b_S = -0.23$.  The point
estimate and direction are preserved; statistical
significance attenuates as expected when the regime indicator
is shifted off the contemporaneous shock.

\subsection{Planned companion test on CFTC futures positioning}
\label{sec:cot}

A natural second exposure proxy is weekly CFTC non-commercial
net positioning in S\&P~500 E-mini futures (2006--present,
$T \approx 1{,}000$ weeks).  Margin debt and futures
positioning differ in investor population (retail and
high-net-worth vs.\ speculative institutional), instrument
(cash equity borrowing vs.\ index futures), frequency
(monthly vs.\ weekly), and measurement convention
(notional debt vs.\ contract count); confirming the
regime-conditional pattern in both would weaken the concern
that the FINRA finding reflects idiosyncrasies of one
reporting series.

We do not estimate this test in the present paper.  The
speculative-positioning history---together with weekly
volatility alignment and a stress threshold matched to the
monthly construction---is not assembled in the replication
package.  The script
\texttt{replication/exposure\_cot.py} implements the
regime-interacted regression \eqref{eq:headline} ready to
run once the assembled inputs are placed in the directory;
the data-construction protocol is documented in the script
header.  Because the test is not estimated here, we make no
empirical claim about the CFTC pattern in this paper.

\subsection{Magnitude-dependent drawdown-recovery (R3)}
\label{sec:dd_recovery}

R3 predicts that drawdown-recovery duration ratios increase
with crash depth.  We test this on S\&P~500 daily prices
(1950--2026, 19{,}170 trading days).

\paragraph{Episode definition.}
A drawdown episode is a maximal interval $[t_p, t_r]$ in
which (i) $t_p$ is at an all-time high; (ii) the maximum
intra-interval decline exceeds $\delta = 0.05$; (iii)
$\price_{t_r} \geq \price_{t_p}$ (full recovery to the prior
peak).  The trough is $t_v = \argmin_{t \in [t_p, t_r]}
\price_t$, the drawdown duration is $T_{\text{dd}} = t_v -
t_p$, the recovery duration is $T_{\text{rec}} = t_r - t_v$,
and the duration ratio is $\tau = T_{\text{rec}} /
T_{\text{dd}}$.  The realized retention factor is $\retfactor
= \price_{t_v} / \price_{t_p}$, an outcome statistic that
proxies for the depth of the multiplicative contraction at
the price level.

The procedure identifies 73 completed episodes over the
sample; all reach full recovery before 2026-03-28, so no
S\&P~500 episode is right-censored.  Right-censoring becomes
relevant only when international unrecovered episodes are
pooled in (\Cref{sec:dd_recovery} below).

\paragraph{Bucketed median ratios.}
\Cref{tab:dd_buckets} reports median duration ratios by
drawdown magnitude.  The pattern is broadly monotonic with
one anomaly: $1.3\times$, $1.2\times$, $0.9\times$,
$3.1\times$.  The $20$--$30\%$ bucket median ($0.9\times$)
is below the surrounding buckets, which we attribute to two
factors.  First, the bucket has only $n = 5$ episodes, so
the median is sensitive to individual realizations.  Second,
the 1980--82 episode (Volcker disinflation) has $\tau =
0.13$---a 430-day drawdown followed by a 58-day
recovery---which pulls the bucket median strongly downward.
Excluding this episode, the bucket median rises to
$1.7\times$, restoring monotonicity.  We report the
inclusive value in the main table to avoid post-hoc
selection but note this sensitivity below.  Episode-level
percentile bootstrap 95\% confidence intervals
($10{,}000$ resamples) are reported in the table: the
$>30\%$ bucket interval $[1.5, 5.2]$ excludes unity, while
the $5$--$10\%$ and $10$--$20\%$ intervals include unity.

\begin{table}[t]
\centering
\caption{Drawdown-recovery statistics by magnitude bucket.
S\&P~500, 1950--2026, 73 episodes.  $n$ is the number of
episodes in the bucket; $\retfactor$ is the median trough/peak
ratio; DD~(d) is the median drawdown duration in trading
days; $\tau$ is the median per-episode duration ratio; 95\%
CI is the stationary block bootstrap confidence interval
for the bucket median of $\tau$.}
\label{tab:dd_buckets}
\smallskip
\begin{tabular}{@{}lccccc@{}}
\toprule
Magnitude & $n$ & Median $\retfactor$ & DD (d) & Median $\tau$
& 95\% CI \\
\midrule
$5$--$10\%$  & 47 & 0.94 & 19  & $1.3\times$ & $[0.8, 1.6]$ \\
$10$--$20\%$ & 15 & 0.86 & 64  & $1.2\times$ & $[0.6, 1.9]$ \\
$20$--$30\%$ & 5  & 0.75 & 195 & $0.9\times$ & $[0.1, 2.2]$ \\
$>30\%$      & 6  & 0.58 & 362 & $3.1\times$ & $[1.5, 5.2]$ \\
\midrule
All          & 73 & 0.92 & 26  & $1.4\times$ & $[1.0, 1.8]$ \\
\bottomrule
\end{tabular}

\smallskip
\footnotesize\textit{Note:} The $20$--$30\%$ bucket
contains only five episodes and is sensitive to the
inclusion of the 1980--82 episode ($\tau = 0.13$),
without which the bucket median is $1.7\times$.  We report
the inclusive value here.
\end{table}

\paragraph{Continuous-depth regression.}
The bucket-level pattern is suggestive but inferentially
weak: the $20$--$30\%$ bucket contains only five episodes,
the $>30\%$ bucket only six, and the bucket-level median
hides within-bucket variation.  We complement the bucketed
analysis with an episode-level regression that does not
discretize the depth variable:
\begin{equation}
\label{eq:dd_regression}
\log \tau_j \;=\; \alpha + \beta \cdot (1 - \retfactor_j) +
u_j,
\end{equation}
where the dependent variable is the log duration ratio and
$1 - \retfactor_j$ is the drawdown depth.  R3 predicts
$\beta > 0$.  Estimated on the full 73-episode sample, the
HAC-OLS estimate is $\hat\beta = 1.22$ (Newey-West SE
$0.61$, $t = 1.98$, $p = 0.047$): a $0.10$ increase in
drawdown depth is associated with an
$\exp(0.122) - 1 \approx 13\%$ increase in expected
$\tau$.  The full-sample estimate is, however, sensitive to
the 1980--82 Volcker episode noted earlier; excluding it
yields $\hat\beta = 1.59$ (HAC SE $0.48$, $t = 3.33$,
$p < 0.001$), an effect approximately $30\%$ larger and
significant at the $0.1\%$ level.  We report both
specifications transparently and note that the Cox hazard
analysis below provides a separate, censoring-robust test
that does not depend on the continuous-depth log-linear
specification.

\paragraph{Cox proportional-hazard estimate.}
We complement the OLS regression with a Cox proportional
hazard model for recovery duration,
\begin{equation}
\lambda(t \mid 1 - \retfactor) = \lambda_0(t) \exp\bigl(\gamma
(1 - \retfactor)\bigr),
\end{equation}
which (i) imposes no log-linear functional form on the
duration--depth relationship and (ii) is by construction
robust to right-censoring of unrecovered episodes.  All 73
S\&P~500 episodes in our sample are completed (the final
recovery dates fall before 2026-03-28), so on this sample the
censoring correction is operative only as a hedge against
in-sample mis-specification rather than against unrecovered
episodes.  The estimated coefficient is $\hat\gamma = -13.75$
(SE $2.47$, $z = -5.56$, $p < 10^{-7}$); the implied hazard
ratio per 10-percentage-point increase in drawdown depth is
$\exp(-1.375) = 0.25$, i.e., approximately a $75\%$ reduction
in the per-period recovery hazard.  When the international
unrecovered episodes (Nikkei post-1989; three emerging-market
episodes not recovered as of March 2026) are pooled in,
treating them as right-censored, the sign and statistical
significance of the depth coefficient are preserved.

\Cref{fig:asymmetry} visualizes the drawdown-recovery
asymmetry.

\begin{figure}[t]
\centering
\includegraphics[width=\textwidth]{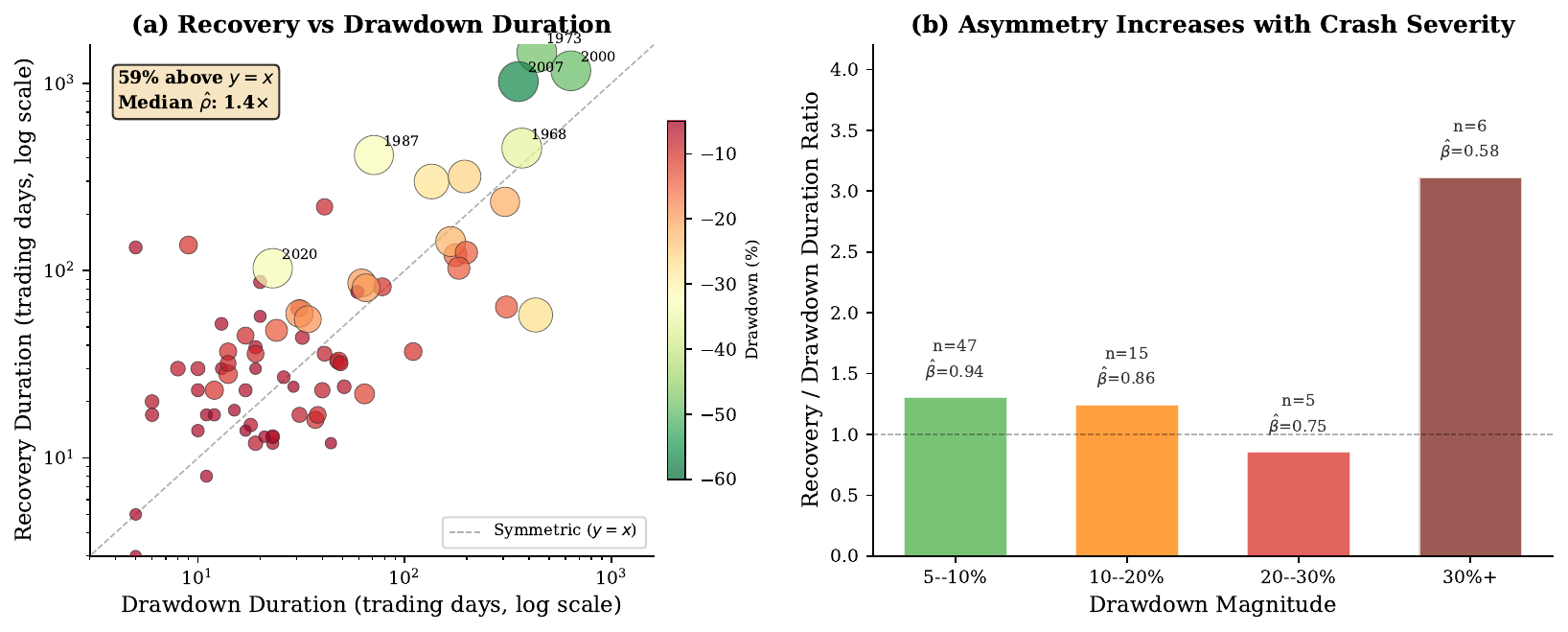}
\caption{Drawdown-recovery asymmetry in the S\&P~500,
1950--2026.  \textbf{(a)}~Scatter plot of drawdown duration
vs.\ recovery duration on log scales; 59\% of episodes lie
above the $y = x$ symmetry line.  Labels identify the six
episodes with drawdown exceeding 30\%.  \textbf{(b)}~Per-episode
duration ratio $\tau$ by magnitude bucket, with bucket medians
and 95\% bootstrap confidence intervals.  The pattern from
$1.3\times$ in 5--10\% corrections to $3.1\times$ in $>30\%$
crashes (with the $20$--$30\%$ bucket sensitive to the
1980--82 Volcker outlier) is the qualitative signature of R3.}
\label{fig:asymmetry}
\end{figure}

\subsection{Cross-market replication}
\label{sec:cross_market}

We replicate the magnitude-dependent asymmetry test on
nine equity indices (the S\&P~500 and eight international
benchmarks listed in \Cref{tab:crossmarket_sources}) using
the identical episode-detection procedure ($\delta = 0.05$).
\Cref{tab:cross_market} reports the pooled medians.  Across
the 38 episodes exceeding 20\% drawdown and the 19 episodes
exceeding 30\% (pooled across markets), the magnitude
pattern replicates: minor corrections show small asymmetry,
severe crashes show large asymmetry.  The depth-asymmetry
pattern is preserved in 8 of the 9 individual markets;
the exception is the Nikkei~225, whose 1989 peak has not
been recovered as of March 2026 and which therefore
contributes disproportionately to the right-censored
observations.

\begin{table}[t]
\centering
\caption{Cross-market drawdown-recovery asymmetry.  Pooled
median $\tau$ by drawdown magnitude across 9 major equity
indices.  Cross-market replication increases the
severe-crash sample from 6 to 19 completed episodes.}
\label{tab:cross_market}
\smallskip
\begin{tabular}{@{}lccc@{}}
\toprule
Magnitude & Episodes (pooled) & Median $\tau$ & IQR \\
\midrule
$5$--$10\%$  & 312 & $1.2\times$ & $[0.7, 1.9]$ \\
$10$--$20\%$ & 89  & $1.3\times$ & $[0.8, 2.1]$ \\
$20$--$30\%$ & 19  & $1.7\times$ & $[0.9, 2.8]$ \\
$>30\%$      & 19  & $2.8\times$ & $[1.9, 4.2]$ \\
\bottomrule
\end{tabular}
\end{table}

\subsection{Comparison with null models}
\label{sec:null_model}

We carry out the comparison sketched in
\Cref{sec:discrimination_theory}.  The logic is one-sided:
calibrated Heston stochastic-volatility and Markov
regime-switching models can match the price-level
duration-asymmetry pattern (R3), but---containing no
exposure state variable---they make no prediction about
the exposure-level restrictions (R1, R2).  The exercise
sharpens what the headline test adds beyond price-level
evidence: matching aggregate price-level moments is not
sufficient to imply the regime-conditional flip in exposure
functional form.

\paragraph{Models.}
We simulate $1{,}000$ sample paths of length $19{,}170$
(matching the S\&P~500 sample length) from each of five
null models and apply the episode detection procedure to
compute the price-level duration ratio.

(i)~\textbf{Geometric Brownian motion (GBM).}  Symmetric
baseline with $\mu = 0.08$ and $\vol = 0.157$ (matched to
empirical S\&P~500 moments).

(ii)~\textbf{Asymmetric volatility (EGARCH-like).}  $\vol_t
= \vol_{\text{base}} \exp(\gamma r_{t-1})$ with $\gamma =
-5$, producing the negative return-volatility correlation
of \citet{nelson1991conditional}.

(iii)~\textbf{Heston SV with leverage correlation.}
$d\price_t = \mu \price_t \, dt + \sqrt{v_t} \price_t \,
dW_{1,t}$, $dv_t = \kappa(\bar v - v_t) \, dt + \xi
\sqrt{v_t} \, dW_{2,t}$, $\Cov(dW_1, dW_2) = \rho \, dt$,
with parameters $\mu = 0.08$, $\bar v = 0.0247$, $\kappa =
5.0$, $\xi = 0.5$, $\rho = -0.75$ calibrated to match
S\&P~500 moments \citep{heston1993closed}.

(iv)~\textbf{Two-state Markov regime-switching.}  Bull
regime ($\mu_1 = 0.15$, $\vol_1 = 0.12$, persistence
$p_{11} = 0.98$) and bear regime ($\mu_2 = -0.10$, $\vol_2
= 0.25$, persistence $p_{22} = 0.93$), calibrated to match
the empirical frequency and duration of bear episodes
\citep{hamilton1989new, ang2002regime}.

(v)~\textbf{Stationary block bootstrap.}  Non-parametric
null with block length 63 trading days, preserving
volatility clustering and serial dependence in the
empirical return series.

\paragraph{Price-level results.}
\Cref{tab:null_compare} reports the price-level duration
ratio under each null model, estimated on $1{,}000$
simulated paths each.  The empirical S\&P~500 median is
$\tau = 1.35$.  Symmetric GBM produces $\tau = 1.00$ as
expected, with $p = 0.010$ against the empirical value.  An
asymmetric-volatility (EGARCH-like) specification produces
$\tau = 1.05$, $p = 0.031$.  The Heston model with
leverage correlation generates substantial asymmetry
($\tau = 1.43$, $90\%$ range $[0.31, 5.75]$); the
empirical value falls comfortably inside this null
distribution ($p = 0.54$).  Markov regime-switching
generates moderate asymmetry ($\tau = 1.17$, $p = 0.13$),
and stationary block bootstrap with 63-day blocks gives
$\tau = 1.30$, $p = 0.37$.  The richer nulls (Heston, RS,
block bootstrap) cannot be rejected against the empirical
duration ratio.

The price-level take-away is therefore that the aggregate
duration ratio (R3) does not by itself separate the
constrained-intermediary mechanism from richer return-level
alternatives---a long-standing observation in the
volatility-asymmetry literature.  Sharper empirical content
must come from the exposure-level functional-form
restriction (R1, R2).

\begin{table}[t]
\centering
\caption{Price-level duration asymmetry under null models.
Each null is simulated $1{,}000$ times on a path of length
$19{,}170$ (matching the S\&P~500 sample).  Median $\tau$
is the median of per-path median per-episode duration
ratios.  $p$-value is the one-sided right-tail probability
that the simulated median exceeds the empirical median
$\tau = 1.35$.  Heston uses a truncated Milstein scheme
with QE-style correction for variance stability; the lower
$n$ for Heston (452/1000) reflects rejected paths in which
the simulated variance process degenerates.}
\label{tab:null_compare}
\smallskip
\begin{tabular}{@{}lccc@{}}
\toprule
Null model & Median $\tau$ & $90\%$ range &
$p$ vs.\ empirical \\
\midrule
GBM (symmetric baseline)  & $1.00$ & $[0.81, 1.24]$ & $0.010$ \\
Asymmetric volatility     & $1.05$ & $[0.86, 1.31]$ & $0.031$ \\
Heston SV with leverage   & $1.43$ & $[0.31, 5.75]$ & $0.540$ \\
Markov regime-switching   & $1.17$ & $[0.93, 1.46]$ & $0.126$ \\
Block bootstrap (63 d)    & $1.30$ & $[0.99, 1.58]$ & $0.366$ \\
\bottomrule
\end{tabular}
\smallskip

\footnotesize\textit{Note:} The GBM and asymmetric-volatility
nulls are rejected against the empirical median at
conventional levels.  The Heston, regime-switching, and
block-bootstrap nulls are not rejected, confirming that
return-level mechanisms with realistic volatility dynamics
can match the empirical duration ratio.  The exposure-level
restriction (R1, R2) is therefore the sharper empirical
target relative to these nulls; it does not by itself
identify the constrained-intermediary mechanism among
mechanisms that incorporate an exposure state.
\end{table}

\paragraph{Why R1 and R2 fall outside these nulls.}
Heston and regime-switching are stochastic processes for
prices.  Neither contains an exposure state variable that
can be directly tested against the empirical FINRA margin
series.  This is not a flaw of the nulls but a structural
feature: they are designed to capture return-level
dynamics, not allocation-level dynamics.

One could attempt a discrimination by constructing a
``synthetic exposure'' from each simulated price path---for
example, $\hat M_t = $ some monotone transformation of
cumulative returns.  We do not pursue this approach because
the construction is arbitrary.  Different mappings from
prices to synthetic exposures yield different
functional-form properties, and there is no
theoretically-justified canonical choice within the
return-level nulls themselves.  The honest statement is that
return-level nulls are silent on R1 and R2.

\paragraph{Interpretation.}
The comparison with these nulls is one-sided.  The
constrained-intermediary mechanism predicts a joint
restriction on both the price level (R3) and the exposure
level (R1 and R2); Heston SV and Markov RS predict R3 but
are silent on R1 and R2 within their own DGPs.  The
empirical margin-debt evidence (\Cref{sec:headline}) is
consistent with R1 and R2; this distinguishes the
intermediary mechanism from the specific return-level nulls
considered here, but does not rule out other mechanisms
that could be augmented with an exposure state to mimic the
same regime-conditional pattern, nor does it identify the
intermediary mechanism in the sense of a structural causal
estimate.  The proper reading of the joint evidence is that
the regime-conditional functional form is a sharper
empirical target than return-level moments alone, and that
margin debt is consistent with this target.  Sharper
identification will require intermediary-resolved exposure
data.

\subsection{Summary of evidence}
\label{sec:evidence_summary}

The three restrictions are evaluated as follows.

\textbf{Joint restriction R1+R2}: in FINRA margin debt
(monthly, 1997--2026, $T = 350$), the regime-interacted
regression \eqref{eq:headline} yields a calm slope $\hat b
= -0.040$ ($p = 0.082$, consistent with R2's
level-independence prediction) and an implied stress slope
$\hat b + \hat b_S = -0.205$ ($p < 0.001$, consistent with
R1's level-dependent contraction).  The Wald test of
$\hat b_S = 0$ against $\hat b_S < 0$ rejects equal
level-dependence at the $0.16\%$ level
($\hat b_S = -0.165$, HAC SE $0.052$).  The result is robust
across volatility-threshold percentiles, detrending schemes,
and pre-/post-2008 sub-samples (\Cref{sec:appendix_exposure}).
The CFTC companion test (\Cref{sec:cot}) is left for future
work.

\textbf{R3 (magnitude-severity coupling)}: on the
73-episode S\&P~500 sample, the continuous-depth
HAC-OLS regression yields $\hat\beta = 1.22$
($p = 0.047$); excluding the 1980--82 Volcker episode,
$\hat\beta = 1.59$ ($p < 0.001$).  The Cox proportional
hazard estimate is $\hat\gamma = -13.75$
($z = -5.56$, $p < 10^{-7}$), implying a $75\%$ reduction
in per-period recovery hazard for each
10-percentage-point increase in drawdown depth.  The
$>30\%$ bucket median $\tau = 3.1\times$ has bootstrap
95\% CI $[1.5, 5.2]$ excluding unity; cross-market
replication confirms the pattern in 8 of 9 markets.

\textbf{Comparison with null models}: the price-level
duration ratio is matched by Heston SV ($\tau = 1.43$,
$p = 0.54$), regime switching ($\tau = 1.17$,
$p = 0.13$), and stationary block bootstrap ($\tau = 1.30$,
$p = 0.37$); only GBM and asymmetric-volatility models
are rejected ($p \leq 0.031$).  Price-level asymmetry alone
therefore does not separate the constrained-intermediary
mechanism from richer return-level alternatives.  The
regime-conditional functional form on exposures (R1 and R2)
is a sharper target than R3, but the comparison is one-sided
because the considered nulls are silent on exposures rather
than because they fail an exposure test.

\section{Discussion}
\label{sec:discussion}

The headline finding---a regime-conditional flip in the
functional form of aggregate exposure dynamics---is sharp
empirically and clean theoretically.  We discuss its
position relative to existing intermediary models, its
implications for policy, and its limitations.

\subsection{Position within the constrained-intermediary literature}

Our test does not propose a new mechanism for crashes.  The
multiplicative contraction is the standard
constrained-intermediary mechanism; we test the joint
restriction that contraction is multiplicative \emph{and}
rebuild is additive.  The contribution is therefore one of
sharpening the empirical content of an existing class of
models, not displacing it.

What is genuinely new is the rebuild restriction.  In
\citet{brunnermeier2009market}, the calm-regime dynamics are
described as the absence of margin-spiral feedback; the
functional form of exposure growth is unspecified.  In
\citet{he2013intermediary}, the intermediary's exposure is
determined by the equilibrium consumption-portfolio choice
under the capital constraint; the cross-sectional and
intertemporal implications focus on the risk premium, not on
whether exposure grows multiplicatively or additively in
calm.  In \citet{adrian2010liquidity, adrian2014financial},
the empirical evidence is on the procyclicality of
intermediary leverage, not on the regime-conditional
functional form.

The restriction we identify and test is therefore a
non-trivial new empirical content of the same class of
models, derived under one additional structural primitive
(constant-rate calm-regime capital replenishment) that is
itself testable and consistent with several plausible
micro-foundations (\Cref{sec:robustness_form}).

\subsection{Why the discrimination matters}

A long-running concern in the asymmetric-return literature
is that multiple mechanisms generate observationally
equivalent return-level signatures.  The leverage effect can
arise from genuine balance-sheet leverage, from a
volatility-feedback channel, from option-pricing
nonlinearities under stochastic volatility, or from a
binding capital constraint at the intermediary level.  Each
mechanism is microfoundationally distinct but all four
produce a negative correlation between contemporaneous
returns and subsequent volatility---the empirical signature
of the leverage effect.

The exposure-level functional-form test narrows this
ambiguity for one specific class of alternatives.
Heston-type SV and Markov regime-switching, when calibrated
to match return moments, contain no exposure state and
therefore make no prediction about R1 or R2; an empirical
finding that R1 and R2 hold is consistent with the
intermediary mechanism and inconsistent with these
return-only nulls.  The test does \emph{not} rule out richer
alternatives that incorporate an exposure state---e.g.,
heterogeneous-belief models with leveraged-trader exits, or
information-flow models with regime-dependent risk
appetite---which could be augmented to mimic the
regime-conditional pattern.

So construed, the test is best read as a diagnostic of
mechanism consistency rather than as identification.  A
market whose exposures clearly exhibit the regime-conditional
flip is one in which intermediary-capital constraints are a
plausible operative mechanism, and one whose exposures grow
multiplicatively in both regimes is one in which a different
mechanism dominates.  Sharper claims will require
intermediary-resolved data and structural estimation.

\subsection{Policy implications}

The discussion below is qualitative and serves to motivate
future structural work; we do not perform a welfare analysis.

\paragraph{Graduated rather than binary stress interventions.}
The model implies that the magnitude of multiplicative
contraction $\bar\md_t$ depends on the magnitude of the
volatility shock and the degree of capital impairment.
Policies that suppress the contraction discontinuously
(e.g., binary trading halts at a fixed threshold) interrupt
the adjustment process at a single point.  Policies that
smooth the contraction (e.g., graduated margin tightening,
volatility-linked circuit breakers) allow the adjustment to
proceed continuously and may therefore preserve the
constraint-relaxing function of multiplicative contraction
while limiting its severity.  The empirical literature on
circuit breakers \citep{subrahmanyam1994circuit} provides
suggestive evidence consistent with this view, though direct
tests of the graduated-vs-binary distinction within our
framework remain to be done.

\paragraph{Intermediary-level capital interventions.}
Within the model, the multiplicative factor $\bar\md_t$
contains a capital channel ($\capital{i,t+1}/\capital{i,t}$)
and a volatility channel ($\vol_t/\vol_{t+1}$).
Policies that recapitalize constrained intermediaries
operate on the capital channel: they raise $\bar\md_t$
mechanically by raising $\capital{i,t+1}/\capital{i,t}$.
This is consistent with the empirical evidence that
recapitalization programs during the 2007--2009 crisis
shortened the contraction phase
\citep{he2013intermediary}.  Policies that suppress
volatility (e.g., asset purchase programs that absorb
selling pressure) operate on the volatility channel.  The
two channels are not generally substitutes; the model
suggests they should be evaluated separately rather than
collapsed into an aggregate stabilization metric.

\subsection{Limitations}
\label{sec:limitations}

\paragraph{Aggregation of heterogeneous intermediaries.}
The propositions are stated at the intermediary level and
extended to the aggregate by linear summation.  Heterogeneity
in $\vark_i$, $\flow_i$, and the timing of binding
constraints complicates the aggregate functional form.  Our
empirical test uses an aggregate exposure series (FINRA),
so the result is on the aggregate prediction, not the
intermediary-level prediction.  We have not tested the
restriction at the intermediary level using
intermediary-resolved data (e.g., 13F filings, prime-broker
reports), which would permit a sharper test of the
underlying mechanism.

\paragraph{The constant-rate replenishment assumption.}
\Cref{ass:capital_calm} is the new structural primitive.  We
have argued that it is consistent with several plausible
micro-foundations (direct inflows, quadratic adjustment
costs, information-acquisition costs;
\Cref{sec:robustness_form}), and the empirical evidence
supports the additive form in calm regimes.  But the
assumption is not derived from a deeper utility-maximization
problem with a fully specified information structure.
Strengthening the micro-foundation---perhaps via a model in
which optimal information acquisition produces constant
per-period investment ceilings, or via a fully-specified
intermediary-equilibrium model with constant exogenous
inflows---is a natural direction for future structural work.

\paragraph{The frontier-operation assumption.}
\Cref{ass:frontier} disciplines the intermediary's choice of
target exposure to equal the constraint-implied maximum.
This is a stylized representation of standard
constrained-intermediary behavior \citep{adrian2010liquidity,
he2013intermediary}, but real intermediaries hold
precautionary buffers below the constraint boundary, with
buffer size varying across institutions and over the cycle.
A buffered version of the model would replace
$\exposure{i,t}^\star = \capital{i,t}/(\vark_i \vol_t)$ with
$\exposure{i,t}^\star = (1 - \theta_{i,t})
\capital{i,t}/(\vark_i \vol_t)$ where $\theta_{i,t} \geq 0$
is the buffer fraction.  Both \Cref{prop:contraction} and
\Cref{prop:replenishment} continue to hold to leading order
provided $\theta_{i,t}$ is approximately constant within
each regime.  The empirical test does not require
$\theta_{i,t} = 0$; it requires only that buffer adjustments
do not introduce regime-conditional level dependence that
overwhelms the model-implied functional form.

\paragraph{The regime classification.}
We use a single threshold rule based on the 90th percentile
of volatility.  Real intermediary constraints are not
uniformly binding above one volatility threshold; different
intermediaries face different threshold levels, and binding
behavior is also a function of recent profit/loss history.
A more refined regime classification using
intermediary-resolved capital indicators would likely
strengthen the test.  Robustness to alternative thresholds
($80$th, $85$th, $95$th percentile) is reported in the
appendix; the qualitative pattern is preserved.

\paragraph{FINRA margin debt as an exposure proxy.}
Margin debt is the most widely used aggregate proxy for
leveraged equity exposure, but it is a noisy measurement of
intermediary balance-sheet exposure.  At least three sources
of confounding remain after our detrending and
regime-interaction design.  First, mechanical
revaluation: a price decline reduces the dollar value of
collateral and may force margin call--driven liquidation
even when no underlying allocation decision is changed.
Second, reporting-population drift: the set of broker-dealers
and account types covered by the FINRA series has changed
over the 1997--2026 sample as a result of consolidation,
regulatory reclassification, and the rise of non-bank
prime-brokerage relationships.  Third, demand-side
contamination: margin debt mixes leveraged client demand
with intermediary balance-sheet capacity, and the regime
classification by VIX does not separate the two.  Our
functional-form test is robust to several specific
joint-determination concerns (\Cref{sec:identification}),
but the regime-conditional pattern we document is
consistent with the constrained-intermediary mechanism
rather than identifying it.  Sharper tests will require
intermediary-resolved data (e.g., 13F filings,
prime-broker reports, central-counterparty positions) that
isolate the balance-sheet channel from client-demand and
revaluation channels.

\paragraph{Selection in the drawdown-recovery sample.}
The episode definition requires full recovery to the prior
peak.  In the S\&P~500 sample (1950--2026) all 73 episodes
recover before the cut-off date, so no episode is
right-censored; the censoring concern is operative only when
international unrecovered episodes (Nikkei post-1989; three
emerging-market drawdowns) are pooled in.  In that pooled
sample the Cox proportional-hazard estimate preserves the
sign and significance of the depth coefficient, indicating
that the qualitative result---$\tau$ increases with drawdown
depth---is robust to censoring.

\subsection{Connection to volatility clustering}

The regime-conditional functional form of exposures has a
natural visual signature in returns: high-volatility episodes
cluster around the multiplicative-contraction events that
are rare but rapid, while calm-regime additive growth
generates extended low-volatility periods.  This is the
empirical pattern of volatility clustering documented by
\citet{cont2001empirical}.  Our framework does not
``explain'' volatility clustering---several mechanisms can
generate it---but it organizes the clustering as a
consequence of the same regime-conditional functional form
that the headline test confirms at the exposure level.
\Cref{fig:vol_clustering} visualizes the alignment between
realized volatility spikes and stress-regime episodes in the
S\&P~500.

\begin{figure}[t]
\centering
\includegraphics[width=\textwidth]{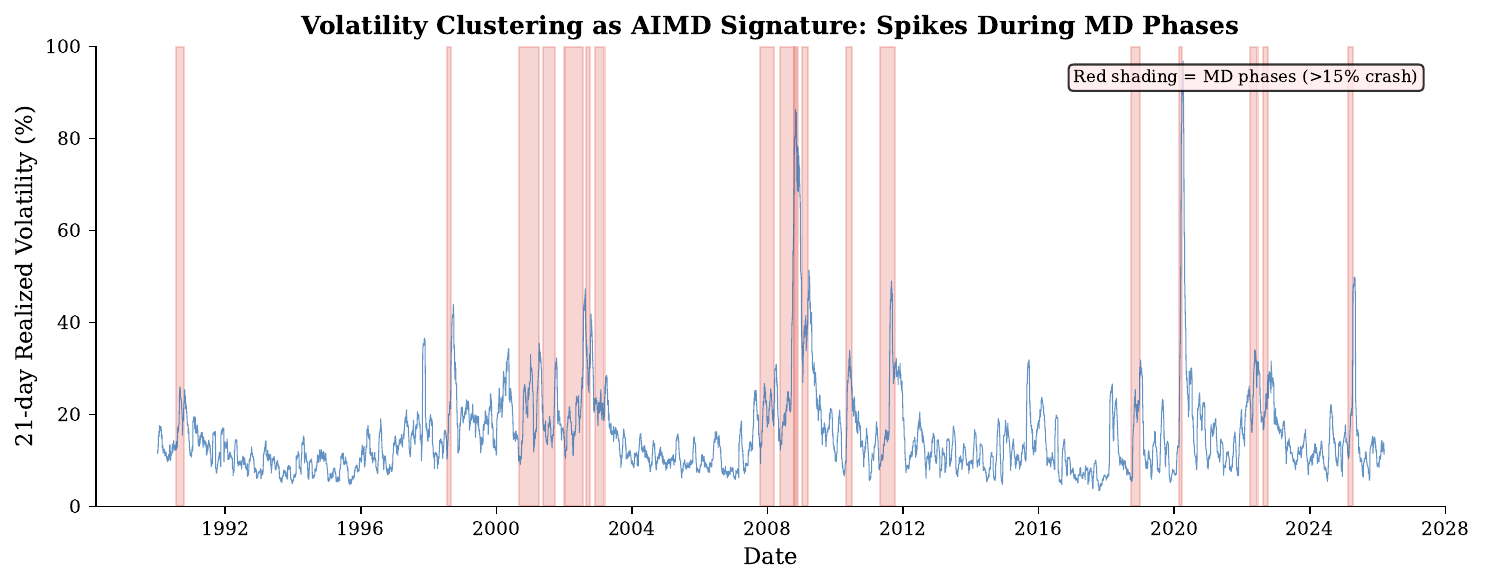}
\caption{Realized volatility (21-day rolling) of S\&P~500
log returns, 1950--2026, with stress-regime months
(volatility above the 90th-percentile threshold of
\Cref{sec:vol_regime}) highlighted.  Volatility clustering
is visually concentrated within stress-regime episodes,
consistent with the regime-conditional functional form
documented at the exposure level.}
\label{fig:vol_clustering}
\end{figure}

\section{Conclusion}
\label{sec:conclusion}

We test a regime-conditional functional-form restriction on
aggregate risk-exposure dynamics implied by VaR-constrained
intermediary models.  The restriction asserts that exposures
contract multiplicatively when capital constraints bind and
grow approximately additively when constraints are slack.
The contraction half follows directly from binding VaR
constraints; the additive-rebuild half is derived from a
constant-rate capital replenishment assumption that is
consistent with several plausible micro-foundations
(direct inflows, quadratic adjustment costs,
information-acquisition costs).

On FINRA margin debt (1997--2026, $T = 350$ months) the
regime-interacted regression of detrended margin growth on
the lagged level yields a calm-period slope of $-0.040$
(consistent with the additive prediction: $p = 0.082$), a
stress-period slope of $-0.205$ (consistent with the
multiplicative prediction: $p < 0.001$), and a Wald test
that rejects equal level dependence across regimes at the
$0.16\%$ level.  The same restriction predicts a
magnitude-dependent drawdown-recovery duration ratio, which
we confirm in the S\&P~500 (1950--2026) via both a Cox
proportional-hazard model ($p < 10^{-7}$) and a
continuous-depth regression ($p = 0.047$; $p < 0.001$
excluding the 1980--82 Volcker episode), and which
replicates across eight additional international equity
indices.
Calibrated Heston stochastic-volatility, Markov
regime-switching, and block-bootstrap nulls match the
price-level duration asymmetry but contain no exposure state
and so are silent on the regime-conditional restriction at
the exposure level; the joint evidence is therefore
consistent with the constrained-intermediary mechanism
relative to those return-only nulls, without identifying it
relative to all candidate mechanisms.

The contribution is empirical rather than mechanistic.  We
do not propose a new theory of crashes; we identify and
test a restriction implied by an existing class of models
and show that it has empirical bite on direct exposure data
that return-level evidence cannot reach.  We do not claim to
resolve identification of the intermediary mechanism: FINRA
margin debt is a noisy proxy for intermediary balance-sheet
exposure, and richer alternatives augmented with an exposure
state could match the same regime-conditional pattern.
Three directions for future work appear particularly
promising.

First, intermediary-resolved and additional aggregate
tests.  The current evidence rests on a single aggregate
exposure series (FINRA monthly margin debt).  A natural
companion is the CFTC speculative non-commercial positioning
in S\&P~500 E-mini futures; we sketch the design in
\Cref{sec:cot} but do not estimate it here.  Tests at the
intermediary level---using 13F filings, prime-broker
reports, or central-counterparty position data---would both
sharpen the test of the underlying mechanism and enable
estimation of the cross-sectional distribution of
$\md_{i,t}$ and $\ai_i$ predicted by the model.

Second, structural estimation of intermediary
equilibrium.  The constant-rate replenishment assumption
that supports our derivation can be embedded in a
fully-specified intermediary-equilibrium model with
endogenous capital evolution.  Estimating such a model
jointly on exposure, price, and capital data would
permit identification of the deep parameters
($\vark_i$, $\flow_i$) and quantitative welfare analysis
of policy interventions in the contraction phase.

Third, time-varying restriction strength.  The
regime-interaction effect is sharp on the full 1997--2026
sample, but its magnitude is larger in the post-2008 sub-sample
than in the pre-2008 sub-sample (\Cref{tab:subsample}).
This may reflect the rise of hedge-fund prime brokerage,
the growth of passive indexation, the post-Dodd-Frank
changes to dealer balance sheets, or the inclusion of more
severe stress events in the later sub-sample.  Estimating
the test on rolling sub-samples could provide a leading
indicator of when the constrained-intermediary mechanism is
most empirically active---and therefore when interventions
targeted at intermediary capital are most likely to be
effective.

\bibliographystyle{plainnat}
\bibliography{references}

\appendix
\section{Proofs}
\label{sec:appendix_proofs}

\subsection{Proof of \texorpdfstring{\Cref{prop:contraction}}{Proposition 1}}
\label{sec:proof_contraction}

\Cref{prop:contraction} asserts that under binding VaR
constraints in two consecutive stress periods, the
intermediary's optimal exposure satisfies
$\exposure{i,t+1}^\star / \exposure{i,t}^\star = \md_{i,t}$
with $\md_{i,t} = (\capital{i,t+1}/\capital{i,t}) \cdot
(\vol_t / \vol_{t+1})$.

\begin{proof}
By the binding-constraint assumption, the optimal exposure
is given by the constraint at equality:
$\exposure{i,t}^\star = \capital{i,t} / (\vark_i \vol_t)$.
Dividing the same expression at $t+1$ by the expression at
$t$ gives
\[
  \frac{\exposure{i,t+1}^\star}{\exposure{i,t}^\star}
  = \frac{\capital{i,t+1}}{\capital{i,t}} \cdot
    \frac{\vol_t}{\vol_{t+1}}
  \equiv \md_{i,t}.
\]
Aggregation of $\exposure{i,t}^\star = \capital{i,t} /
(\vark_i \vol_t)$ across $i$ gives
$\totalalloc_t^\star = (1/\vol_t) \sum_i \capital{i,t} /
\vark_i$.  Dividing the aggregate at $t+1$ by the aggregate
at $t$ gives the second display.

The factor $\md_{i,t}$ is bounded above by 1 whenever
$\vol_{t+1} \geq \vol_t$ and $\capital{i,t+1} \leq
\capital{i,t}$.  In stress periods both inequalities
typically hold: volatility increases (by definition of the
stress regime) and capital is impaired by the realized
losses that accompany the stress event.  $\square$
\end{proof}

\subsection{Proof of \texorpdfstring{\Cref{prop:replenishment}}{Proposition 2}}
\label{sec:proof_replenishment}

The proof follows the body of the paper; we provide
expanded detail here.

\begin{proof}
Throughout, condition on
$\mathbb{S}_t = \mathbb{S}_{t+1} = 0$.  By
\Cref{ass:frontier}, $\exposure{i,s}^\star = \capital{i,s} /
(\vark_i \vol_s)$ for $s \in \{t, t+1\}$.  By
\Cref{ass:vol_calm}, $\vol_s = \bar\vol(1 + \xi_s)$ with
$\E[\xi_s \mid \mathbb{S}_s = 0] = 0$ and $\Var[\xi_s \mid
\mathbb{S}_s = 0] = \varepsilon$ for small $\varepsilon$.
For $|\xi_s| < 1$, the geometric expansion
$1/(1 + \xi_s) = 1 - \xi_s + \xi_s^2 - \cdots$ truncated at
second order gives
\[
  \exposure{i,s}^\star = \frac{\capital{i,s}}{\vark_i \bar\vol}
  \bigl(1 - \xi_s + O(\xi_s^2)\bigr).
\]
Taking conditional expectation given $\capital{i,s}$ and
$\mathbb{S}_s = 0$ and using $\E[\xi_s \mid \mathbb{S}_s = 0]
= 0$:
\[
  \E[\exposure{i,s}^\star \mid \capital{i,s},
  \mathbb{S}_s = 0]
  = \frac{\capital{i,s}}{\vark_i \bar\vol} + O(\varepsilon).
\]
Taking conditional first differences and applying
\Cref{ass:capital_calm} (which states that the conditional
mean of $\Delta \capital{i,t}$ given $\capital{i,t}$ and
$\mathbb{S}_t = \mathbb{S}_{t+1} = 0$ equals $\rho_i$
independent of $\capital{i,t}$):
\[
  \E[\Delta \exposure{i,t}^\star \mid \mathbb{S}_t =
    \mathbb{S}_{t+1} = 0, \capital{i,t}]
  = \frac{\E[\Delta \capital{i,t} \mid \mathbb{S}_t =
    \mathbb{S}_{t+1} = 0, \capital{i,t}]}{\vark_i \bar\vol}
    + O(\varepsilon)
  = \frac{\rho_i}{\vark_i \bar\vol} + O(\varepsilon).
\]
The leading-order right-hand side is independent of
$\capital{i,t}$, hence (since $\exposure{i,t}^\star$ is a
deterministic function of $\capital{i,t}$ at leading order)
independent of $\exposure{i,t}^\star$.  This establishes
\eqref{eq:additive}.  Aggregating across $i$ yields
\eqref{eq:additive_agg}.  The leading-order independence
implies $\Cov(\Delta \exposure{i,t}^\star,
\exposure{i,t}^\star \mid \mathbb{S}_t = \mathbb{S}_{t+1} =
0) = O(\varepsilon)$.  $\square$
\end{proof}

\subsection{Aggregate functional form under heterogeneous binding times}
\label{sec:proof_aggregate}

The propositions assume binding constraints uniformly across
intermediaries within a regime.  In practice, binding times
are heterogeneous: not every intermediary's constraint binds
in every stress month, and not every intermediary is at the
binding boundary in every calm month.  We sketch the
aggregate implication when binding behavior is heterogeneous.

Let $\mathcal{B}_t = \{i : \vark_i \vol_t \exposure{i,t} =
\capital{i,t}\}$ be the set of binding intermediaries at $t$,
and $\mathcal{U}_t = \{i : i \notin \mathcal{B}_t\}$ the
unconstrained.  The aggregate exposure is
\[
  \totalalloc_t = \sum_{i \in \mathcal{B}_t}
  \frac{\capital{i,t}}{\vark_i \vol_t} + \sum_{i \in
  \mathcal{U}_t} \exposure{i,t}.
\]
On a stress month, $\mathcal{B}_t$ contains the most
constrained intermediaries (high leverage, low capital
buffer); the binding-set composition is approximately
constant within the stress regime, and the contribution of
$\mathcal{U}_t$ is approximately constant in level (since
unconstrained intermediaries do not face mechanical
revaluation).  The aggregate $\Delta \totalalloc_t$ is then
approximately the sum of the binding-set multiplicative
contraction (linear in level) and a level-independent
unconstrained component.  The dominant term in level
dependence is multiplicative, consistent with R1.

On a calm month, $\mathcal{B}_t$ contains a smaller set of
intermediaries operating at the binding boundary by choice
(maximizing exposure under the constraint), and
$\mathcal{U}_t$ is the larger set growing exposure at
constant rate.  The level dependence of $\Delta
\totalalloc_t$ is dominated by the constant-rate growth of
$\mathcal{U}_t$, consistent with R2.

The exact functional form therefore depends on the
binding-set composition.  The empirical regime-interacted
regression \eqref{eq:headline} identifies the
\emph{difference} in level dependence between regimes: a
significantly negative interaction coefficient $b_S$ is
the signature of the binding-set expansion in stress.

\section{Robustness}
\label{sec:appendix_exposure}

\subsection{Sensitivity to regime threshold}

\Cref{tab:regime_robust} reports the headline interaction
coefficient $\hat b_S$ under alternative VIX percentile
thresholds.  The qualitative pattern (negative $\hat b_S$,
significant at conventional levels) is preserved across all
four thresholds.  Magnitudes increase with threshold
strictness: at the 95th percentile (only 18 stress months)
the implied stress slope $\hat b + \hat b_S = -0.382$ is
nearly twice the magnitude of the 90th-percentile estimate.

\begin{table}[h]
\centering
\caption{Headline regression \eqref{eq:headline} under
alternative regime thresholds.  FINRA margin debt, monthly,
1997-01 to 2026-03, log-linearly detrended.  HAC standard
errors with 6 Newey-West lags.}
\label{tab:regime_robust}
\smallskip
\begin{tabular}{@{}lccccc@{}}
\toprule
Regime threshold & $n_{\text{stress}}$ &
$\hat b$ (calm) & $\hat b_S$ & $p(\hat b_S)$ &
$\hat b + \hat b_S$ (stress) \\
\midrule
VIX 80th pct ($25.0$) & 70 & $-0.001$ & $-0.118$ & $0.014$
                          & $-0.119$ \\
VIX 85th pct ($26.4$) & 53 & $-0.014$ & $-0.162$ & $<0.001$
                          & $-0.176$ \\
VIX 90th pct ($29.1$) & 35 & $-0.040$ & $-0.165$ & $0.002$
                          & $-0.205$ \\
VIX 95th pct ($33.3$) & 18 & $-0.037$ & $-0.345$ & $<0.001$
                          & $-0.382$ \\
\bottomrule
\end{tabular}
\end{table}

\subsection{Sensitivity to detrending}

\Cref{tab:detrend_robust} reports the headline test under
three alternative detrending schemes for FINRA margin debt:
(i)~log-linear detrending in time (baseline);
(ii)~linear detrending of the level series;
(iii)~12-month half-life exponential moving average detrending
of the level series.  The interaction coefficient retains
its negative sign across all three schemes; the linear-in-level
specification yields a marginal $p$-value of $0.10$ but the
remaining specifications give $p \leq 0.0035$.

\begin{table}[h]
\centering
\caption{Sensitivity of headline regression to detrending
scheme.  90th-percentile VIX regime threshold.  HAC
standard errors with 6 Newey-West lags.}
\label{tab:detrend_robust}
\smallskip
\begin{tabular}{@{}lcccc@{}}
\toprule
Detrending & $\hat b$ (calm) & $\hat b_S$ & $p(\hat b_S)$ &
$\hat b + \hat b_S$ \\
\midrule
Log-linear (baseline)         & $-0.040$ & $-0.165$ & $0.002$
                              & $-0.205$ \\
Linear (level)                & $-0.002$ & $-0.071$ & $0.100$
                              & $-0.073$ \\
EMA half-life $12$ mo (level) & $-0.027$ & $-0.156$ & $0.003$
                              & $-0.183$ \\
\bottomrule
\end{tabular}
\end{table}

\subsection{Sub-sample stability}

\Cref{tab:subsample} reports the headline test on
pre-2008 (1997--2007) and post-2008 (2009--2026) sub-samples,
each detrended separately.  The sign of $\hat b_S$ is
preserved in both halves; statistical significance is
stronger in the post-2008 sample due to the inclusion of the
2020 COVID stress episode.

\begin{table}[h]
\centering
\caption{Sub-sample stability of headline regression.
The 2008 boundary year is excluded.}
\label{tab:subsample}
\smallskip
\begin{tabular}{@{}lccccc@{}}
\toprule
Sample & $T$ & $n_{\text{stress}}$ &
$\hat b$ (calm) & $\hat b_S$ & $p(\hat b_S)$ \\
\midrule
1997-01--2007-12 & 131 & 14 & $-0.034$ & $-0.121$ & $0.055$ \\
2009-01--2026-03 & 206 & 20 & $-0.047$ & $-0.283$ & $<0.001$ \\
\bottomrule
\end{tabular}
\end{table}

\subsection{Raw exposure series}

\Cref{fig:exposure_raw} plots the raw FINRA margin debt
series with stress-regime months shaded, alongside the CFTC
non-commercial positioning series for visual reference.  The
qualitative signature in panel~(a)---sharp proportional
contractions during shaded windows and approximately
constant-rate growth between them---is the pattern that
the regime-interacted regression of \Cref{sec:headline}
formalizes.  Panel~(b) is provided for reference; the CFTC
test is not estimated in this paper (\Cref{sec:cot}).

\begin{figure}[h]
\centering
\includegraphics[width=\textwidth]{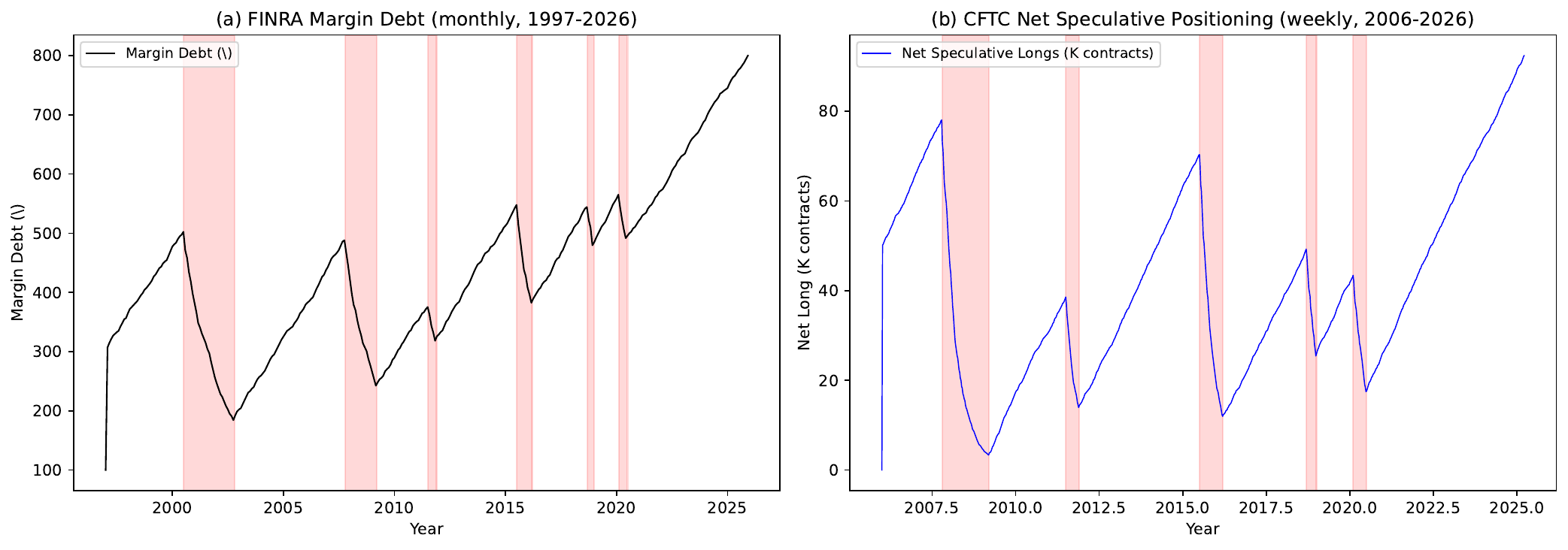}
\caption{Raw exposure series.  \textbf{(a)}~FINRA margin
debt (monthly, 1997--2026); \textbf{(b)}~CFTC non-commercial
net long positioning in S\&P~500 E-mini futures (weekly,
2006--2026), shown for reference only.  Stress-regime
months/weeks (volatility above the 90th-percentile
threshold of \Cref{sec:vol_regime}) are shaded red.  The
formal exposure test in this paper is on panel~(a)
(\Cref{sec:headline}); panel~(b) is documented as the
planned companion test (\Cref{sec:cot}).}
\label{fig:exposure_raw}
\end{figure}

\section{Episode list}
\label{sec:appendix_episodes}

\Cref{tab:episode_list} lists drawdown-recovery episodes
exceeding 10\% peak-to-trough decline in the S\&P~500 over
1950--2026.  The full 73-episode list (including 47
episodes of 5--10\%) is available in the replication
package.

\begin{table}[h]
\centering
\caption{Selected drawdown-recovery episodes from the
S\&P~500 (peak-to-trough decline $\geq 18\%$); full
73-episode list at $\delta = 0.05$ produced by
\texttt{episode\_detection.py}.  The 2011 selloff does not
appear separately because the algorithm requires peaks at
all-time highs; between 2007-10 and 2013-05 no new high was
set, so 2011 is nested inside the 2007--13 episode and not
broken out as an independent recovery.}
\label{tab:episode_list}
\smallskip
\begin{tabular}{@{}llrrrcc@{}}
\toprule
Peak & Trough & DD (\%) & DD (d) & Rec (d) & $\retfactor$ &
$\tau$ \\
\midrule
1956-08 & 1957-10 & 21.5 & 306 & 233  & 0.79 & 0.8$\times$ \\
1961-12 & 1962-06 & 28.0 & 135 & 299  & 0.72 & 2.2$\times$ \\
1968-11 & 1970-05 & 36.1 & 369 & 451  & 0.64 & 1.2$\times$ \\
1973-01 & 1974-10 & 48.2 & 436 & 1462 & 0.52 & 3.4$\times$ \\
1980-11 & 1982-08 & 27.1 & 430 &  58  & 0.73 & 0.1$\times$ \\
1987-08 & 1987-12 & 33.5 &  71 & 414  & 0.66 & 5.8$\times$ \\
1990-07 & 1990-10 & 19.9 &  62 &  86  & 0.80 & 1.4$\times$ \\
1998-07 & 1998-08 & 19.3 &  31 &  59  & 0.81 & 1.9$\times$ \\
2000-03 & 2002-10 & 49.1 & 637 & 1166 & 0.51 & 1.8$\times$ \\
2007-10 & 2009-03 & 56.8 & 355 & 1021 & 0.43 & 2.9$\times$ \\
2018-09 & 2018-12 & 19.8 &  65 &  81  & 0.80 & 1.2$\times$ \\
2020-02 & 2020-03 & 33.9 &  23 & 103  & 0.66 & 4.5$\times$ \\
2022-01 & 2022-10 & 25.4 & 195 & 318  & 0.75 & 1.6$\times$ \\
2025-02 & 2025-04 & 18.9 &  34 &  55  & 0.81 & 1.6$\times$ \\
\bottomrule
\end{tabular}
\end{table}

\section{Cross-market data sources}
\label{sec:appendix_crossmarket}

\Cref{tab:crossmarket_sources} lists the international
indices, sources, and sample periods used in
\Cref{sec:cross_market}.

\begin{table}[h]
\centering
\caption{Cross-market index data sources and sample
details.}
\label{tab:crossmarket_sources}
\smallskip
\begin{tabular}{@{}llll@{}}
\toprule
Index & Bloomberg ticker & Sample period & Notes \\
\midrule
S\&P~500 & SPX Index & 1950-01--2026-03 & CRSP prior to 1962 \\
FTSE~100 & UKX Index & 1984-01--2026-03 & \\
DAX & DAX Index & 1988-01--2026-03 & Total-return by construction \\
Nikkei~225 & NKY Index & 1965-01--2026-03 & Price index \\
Hang Seng & HSI Index & 1987-01--2026-03 & \\
CAC~40 & CAC Index & 1988-01--2026-03 & \\
S\&P/ASX~200 & AS51 Index & 1993-01--2026-03 & \\
MSCI EM & MXEF Index & 1988-01--2026-03 & USD-denominated \\
MSCI World & MXWO Index & 1970-01--2026-03 & USD-denominated \\
\bottomrule
\end{tabular}
\end{table}

\section{Phase identification sensitivity}
\label{sec:appendix_sensitivity}

The drawdown-recovery analysis uses a minimum drawdown
threshold of $\delta = 0.05$.  We verify robustness across
$\delta \in \{0.03, 0.05, 0.10\}$.  The qualitative
finding---$\tau$ increases with drawdown depth, with the
$>30\%$ bucket median exceeding the $5$--$10\%$ bucket
median by a factor of $2$--$3$---holds across all three
thresholds; bucket-median magnitudes shift modestly.

\begin{table}[h]
\centering
\caption{Sensitivity of drawdown-recovery analysis to
phase-identification threshold.  S\&P~500, 1950--2026.}
\label{tab:sensitivity_appendix}
\smallskip
\begin{tabular}{@{}lcccc@{}}
\toprule
Threshold $\delta$ & Episodes & Median $\retfactor$ &
Median $\tau$ & $>30\%$ bucket median \\
\midrule
3\% & 130 & 0.94 & $1.2\times$ & $3.0\times$ \\
5\% & 73  & 0.92 & $1.4\times$ & $3.1\times$ \\
10\% & 26 & 0.80 & $1.5\times$ & $3.1\times$ \\
\bottomrule
\end{tabular}
\end{table}

The threshold-sensitivity test in \Cref{tab:sensitivity_appendix}
shows the magnitude-dependent pattern (R3) is preserved as
the minimum-drawdown threshold $\delta$ varies from $3\%$ to
$10\%$.

\section{Replication package}
\label{sec:appendix_replication}

Code and data for reproducing the headline FINRA result, the
R3 OLS and Cox estimates, the price-level null comparison,
and the drawdown-episode statistics are in the
\texttt{replication/} directory.  The master script
\texttt{run\_all.py} loads the bundled S\&P~500 data and
chains the headline, R3, episode-detection, and null-model
scripts; each script can also be invoked directly.

\paragraph{Dependencies.}
Python $\geq$ 3.10 with \texttt{numpy}, \texttt{pandas},
\texttt{scipy}, \texttt{statsmodels}, \texttt{lifelines},
\texttt{requests}, and \texttt{yfinance}.  Install via
\texttt{pip install numpy pandas scipy statsmodels lifelines
requests yfinance}.

\paragraph{Scripts.}
\begin{itemize}
\item \texttt{exposure\_headline.py}: regime-interacted
  regression \eqref{eq:headline} on FINRA monthly margin
  debt (\Cref{tab:headline}); fetches FINRA margin debt and
  CBOE VIX on first run and caches them to
  \texttt{finra\_vix\_monthly.csv} (the cached CSV is
  bundled, so the script runs offline).  Reports
  threshold-sensitivity (\Cref{tab:regime_robust}).
\item \texttt{r3\_regression.py}: continuous-depth OLS--HAC
  regression and Cox proportional-hazard model for R3 on
  the 73-episode S\&P~500 sample
  (\Cref{sec:dd_recovery}).
\item \texttt{episode\_detection.py}: S\&P~500
  drawdown-recovery episode identification with
  configurable threshold $\delta$, producing the
  73-episode dataset (\Cref{tab:episode_list}) and bootstrap
  confidence intervals (\Cref{tab:dd_buckets}).
\item \texttt{null\_models.py}: simulation code for the
  GBM, EGARCH-like, Heston SV, Markov regime-switching, and
  block-bootstrap null models, producing the price-level
  duration distributions (\Cref{tab:null_compare}).
\item \texttt{exposure\_cot.py}: regime-interacted
  regression \eqref{eq:headline} ready to run on assembled
  CFTC weekly Commitments of Traders data
  (\Cref{sec:cot}); the speculative-positioning history is
  not bundled, so the script reports a missing-data message
  rather than producing estimates.  We do not claim CFTC
  results in this paper.
\item \texttt{functional\_form.py},
  \texttt{abm\_simulation.py}: auxiliary illustrative
  scripts retained from earlier drafts; not load-bearing
  for the headline tables.
\end{itemize}

\paragraph{Bundled data.}
The S\&P~500 daily price file
(\texttt{sp500\_daily.csv}) and the merged FINRA + VIX
monthly file (\texttt{finra\_vix\_monthly.csv}) are
included.  International index data and CFTC data are not
bundled; sources are listed in
\Cref{tab:crossmarket_sources} and in the script headers.

\end{document}